\newenvironment{claimproof}{\begin{proof}}{\hfill $\blacksquare$\end{proof}}
\newcommand{\proofsubparagraph}[1]{\paragraph{#1}}
\crefname{algocf}{Algorithm}{Algorithms}
\crefname{observation}{Observation}{Observations}
\crefname{claim}{Claim}{Claims}
\DeclarePairedDelimiter{\braces}{\{}{\}}            %
\DeclarePairedDelimiter{\parens}{(}{)}              %
\DeclarePairedDelimiter{\angles}{\langle}{\rangle}  
\DeclarePairedDelimiterX{\setdef}[2]{\{}{\}}{#1 : #2}        %
\DeclarePairedDelimiterXPP{\exclude}[1]{\mathopen{}\setminus}{\{}{\}}{}{#1}
\newcommand{\TMC}{\textsc{Tree Minor Containment}}
\newcommand{\PPM}{\textsc{Inclusive Poset Pair Cover}}
\newcommand{\InclusiveSetCover}{\textsc{Inclusive Set Cover}}
\newcommand{\dun}{\mathbin{\sqcup}}
\newcommand{\bigdun}{\operatorname{\bigsqcup}}
\newcommand{\Nset}{\mathbb N}
\newcommand{\diam}{\mathrm{diam}}
\newcommand{\defproblem}[3]{
    \vspace{1mm}
    \noindent\fbox[bt]{
        \begin{minipage}{0.97\textwidth}
            #1\newline
            {\bf{Input:}} #2\newline
            {\bf{Question:}} #3
        \end{minipage}
    }
}
\newcommand{\posetcat}{\operatorname{OCat}}
\newcommand{\ordercat}{order caterpillar}
\begin{document}
\title{Dichotomies for Tree Minor Containment with Structural Parameters%
\thanks{%
    Partially supported by JSPS KAKENHI Grant Numbers %
    JP23KJ1066, %
    JP21J20547, %
    JP21K17812, %
    JP22H03549, %
    JP21K11752, and %
    JP22H00513, %
    and JST ACT-X Grant Number JPMJAX2105. %
}} %
\author{Tatsuya Gima\inst{1,3}\orcidID{0000-0003-2815-5699} \and
Soh Kumabe\inst{2,3}\orcidID{0000-0002-1021-8922} \and
Kazuhiro Kurita\inst{1}\orcidID{0000-0002-7638-3322} \and
Yuto Okada\inst{1}\orcidID{0000-0002-1156-0383} \and
Yota Otachi\inst{1}\orcidID{0000-0002-0087-853X}}
\institute{Nagoya University, Nagoya, Japan \\
\email{gima@nagoya-u.jp, kurita@i.nagoya-u.ac.jp, okada.yuto.b3@s.mail.nagoya-u.ac.jp, otachi@nagoya-u.jp} \and 
The University of Tokyo, Tokyo, Japan \\
\email{soh\_kumabe@mist.i.u-tokyo.ac.jp}
\and 
JSPS Research Fellow
}

\authorrunning{T. Gima, S. Kumabe, K. Kurita, Y. Okada, and Y. Otachi} %

\maketitle              %
\begin{abstract}
The problem of determining whether a graph $G$ contains another graph $H$ as a minor, referred to as the \emph{minor containment problem}, is a fundamental problem in the field of graph algorithms. While it is \NP-complete when $G$ and $H$ are general graphs, it is sometimes tractable on more restricted graph classes. 
This study focuses on the case where both $G$ and $H$ are trees, known as the \emph{tree minor containment problem}. Even in this case, the problem is known to be \NP-complete. In contrast, polynomial-time algorithms are known for the case when both trees are caterpillars or when the maximum degree of $H$ is a constant.
Our research aims to clarify the boundary of tractability and intractability for the tree minor containment problem. Specifically, we provide dichotomies for the computational complexities of the problem based on three structural parameters: the diameter, pathwidth, and path eccentricity.
\keywords{Minor containment \and Tree \and Diameter \and Path eccentricity \and Pathwidth} %
\end{abstract}

\section{Introduction}
In the field of graph algorithms, given two graphs $G$ and $H$, the problem of determining whether $G$ contains $H$ is a fundamental problem.
This type of problem, such as (induced) subgraph isomorphism~\cite{Garey:book:1990}, minor containment~\cite{MATOUSEK1992343}, and topological embedding~\cite{LaPaugh:ATC:78}, is often \NP-complete when $G$ and $H$ are general graphs.
Therefore, extensive research has been conducted on whether these problems can be efficiently solved on more restricted classes of graphs~\cite{MATOUSEK1992343,Gupta:DAM:2005,HajiaghayiN:JCSS:07,BodlaenderHKKOO:Algorithmica:20}.
The class of trees is the most fundamental one among such graph classes.
For all the problems listed above, except the minor containment problem, there are polynomial time algorithms~\cite{Matula:AAC:1978,Shamir:JA:1999,Abboud:ACM:2018}, even for generalized versions~\cite{Gupta:Alg:1998}.

We focus on the \emph{minor containment problem}, which is the problem of determining whether graph $G$ contains graph $H$ as a minor. Even when both $G$ and $H$ are trees, in which case we call the problem \TMC{}, it remains \NP-complete~\cite{MATOUSEK1992343}. Furthermore, it remains \NP-complete even if the diameters of both trees are constant~\cite{MATOUSEK1992343}. However, polynomial-time algorithms are known for cases where the maximum degrees of $H$ is constant~\cite{Pekka:SIAM:1995,Akutsu:TCS:2021,Naomi:IJFCS:2000} or when both trees are caterpillars~\cite{Gupta:DAM:2005,Miyazaki:ICPRAM:2022}. 
Therefore, what condition makes \TMC{} tractable is a natural question.
In what follows, we denote $G$ and $H$ as $T$ and $P$, respectively, since both graphs are trees.

\subsection{Our Contributions}
\begin{table}[t]
    \centering
    \begin{minipage}[c]{0.35\textwidth}
    \centering
    \begin{tabular}{c|c|c|c|c}
    $\diam$ & $\le 3$ & \;$4$\; & $5$ & $\geq 6$ \\ \hline
    $\le 3$                           &  \multicolumn{3}{c}{\P}        \\ \cline{1-2}\cline{5-5}
    $4$                               &         & \multicolumn{2}{c|}{} & \NPC \\ \cline{1-1}\cline{3-3}
    $5$                               &  \multicolumn{2}{c|}{} &   &   \\ \cline{1-1}\cline{4-4}
    $\geq 6$                          &  \multicolumn{3}{c|}{meaningless}         \\
    \end{tabular}
    \label{tab:diameter}
    \end{minipage}%
    \begin{minipage}[c]{0.32\textwidth}
    \centering
    \begin{tabular}{c|c|c|c}
    $\mathrm{pe}$ & $\le 1$ & $2$ & $\geq 3$ \\  \hline
    $\le 1$ & \multicolumn{3}{c}{\P} \\ \cline{1-2}\cline{4-4}
    $2$     & & & \NPC \\ \cline{1-1}\cline{3-3}
    $\geq 3$ & \multicolumn{2}{c|}{meaningless} & \\ 
    \end{tabular}
    \end{minipage}%
    \begin{minipage}[c]{0.32\textwidth}
    \centering
    \begin{tabular}{c|c|c|c}
    $\mathrm{pw}$ & \; $1$ \;& $2$ & $\geq 3$ \\  \hline
    $1$ & \multicolumn{3}{c}{\P} \\ \hline
    $2$ & &\multicolumn{2}{c}{\NPC}\\ \cline{1-1}\cline{3-3}
    $\geq 3$ & \multicolumn{2}{c|}{meaningless} & \\ 
    \end{tabular}
    \end{minipage}%
    \caption{In these tables, $\diam$, $\mathrm{pe}$, and $\mathrm{pw}$ denote the diameter, path eccentricity, and pathwidth, respectively.
    The first row represents the values that a tree $T$ has, and the first column represents the values that a tree $P$ has.
    There is no need to consider problems in these areas marked ``meaningless''.}
    \label{tab:summary}
\end{table}

In this paper, we show dichotomies for three different structural parameters, diameter, pathwidth, and path eccentricity. We summarize dichotomies with respect to each parameter in \Cref{tab:summary}.

Even when the diameters of $T$ and $P$ are constant, it is known that \TMC{} is \NP-complete~\cite{MATOUSEK1992343}.
Although they did not clarify the exact value of the constant, it can easily be observed that the constant is $8$, which is not tight.
Our first contribution is to provide the tight diameter requirement for \TMC{} to be \NP-complete.

\begin{theorem}
    \TMC{} is \NP-complete if the diameters of $T$ and $P$ are at least $6$ and $4$, respectively.
    Otherwise, \TMC{} can be solved in polynomial time.
\end{theorem}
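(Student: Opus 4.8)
The plan is to prove the two directions separately, using throughout the elementary observation that if $P$ is a minor of $T$ then $\diam(P)\le\diam(T)$: a diametral path of $P$ of length $d$ maps, under any branch decomposition, to $d+1$ pairwise-disjoint connected branch sets of $T$ whose consecutive members are joined by an edge, so concatenating them gives a walk of length at least $d$ in $T$. This disposes of the ``meaningless'' region ($\diam(P)>\diam(T)$, where the answer is always \emph{no}) and shows that the polynomial region splits into two regimes to be handled independently: small pattern ($\diam(P)\le3$) and small host ($\diam(T)\le5$).

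For the hardness direction I would construct, from a classical numerical covering problem, instances with $\diam(T)=6$ and $\diam(P)=4$. A diameter-$4$ tree is a centre vertex with pendant stars attached, each ``leg'' being a star $K_{1,\ell_i}$, so $P$ is encoded by the multiset of its leg sizes; a diameter-$6$ tree is a centre vertex whose branches are depth-$3$ subtrees. Under the branch-set model, embedding $P$ into $T$ amounts to assigning each leg of $P$ to a branch of $T$ that can absorb a star of the corresponding size, with several legs allowed to share one branch subject to a monotone (``inclusive'') capacity rule. This packing-with-monotone-capacities structure is exactly what the intermediate problems \InclusiveSetCover{} and \PPM{} are meant to capture. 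Concretely, I would first establish \NP-hardness of \InclusiveSetCover{}, then reduce it to \PPM{}, and finally realise each \PPM{} instance as a pair $(T,P)$ of the two shallow shapes above so that feasible covers correspond precisely to minor models; membership in \NP{} is immediate, a branch decomposition being a polynomial certificate.

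For the tractable direction I would give two algorithms. When $\diam(P)\le3$, the pattern $P$ is a single vertex, a single edge, a star, or a double star, and minor containment reduces to a directly checkable condition: $K_{1,s}$ is a minor of $T$ iff $T$ has at least $s$ leaves, and for a double star one searches over the edges of $T$ for two adjacent connected cores and applies a Hall-type matching test for the required pendant branches. When $\diam(T)\le5$, the tree $T$ decomposes around a central vertex or central edge into branches of depth at most $2$, and by the diameter bound any realisable $P$ is equally shallow; I would guess the constant-size skeleton of the embedding (the branch set onto which the centre region of $P$ is contracted) and then solve the remaining assignment of the shallow branches of $P$ to those of $T$ as a polynomial-size bipartite matching or flow problem whose capacities record how many pattern branches each host branch can accommodate.

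The main obstacle is the tightness of the hardness construction. The previously known constant $8$ comes from gadgets that are comfortably deep; forcing both trees down to depths $3$ and $2$ leaves almost no slack, so the constraints of the source problem must be simulated entirely through the degree sequences of shallow stars and the monotone sharing of branches. Designing \PPM{} to encode exactly this, and checking that no unintended minor model can bypass the intended assignment, is the delicate part; by comparison the two polynomial algorithms are routine once the right matching or flow formulation is fixed, the only real care being the corner $\diam(T)=5$ with $\diam(P)\in\{4,5\}$, where a small structural lemma is needed to guarantee that the matching faithfully captures every admissible sequence of contractions.
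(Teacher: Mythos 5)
Your hardness pipeline has a genuine flaw: you chain \InclusiveSetCover{} through \PPM{} and then propose to realise \PPM{} instances as trees of diameter $6$ and $4$. In the paper, \PPM{} is the intermediate problem for the \emph{pathwidth} dichotomy, not the diameter one, and it cannot play the role you assign to it. A \PPM{} instance is built over an arbitrary partial order, and its natural tree realisation encodes each poset element as an \ordercat{}, whose backbone has length $\Theta(|U|)$ --- pathwidth $2$ but unbounded diameter. If you instead try to encode poset elements as stars inside diameter-$4$/diameter-$6$ trees, you lose: minor containment between stars is decided by comparing leaf counts, i.e., it induces a \emph{total} order, so the arbitrary order $\preceq_U$ (in the paper's \NP-hardness proof for \PPM{} it is a product order on $(\mathbb{Z}\cup\{-\infty\})^3$, far from linear) cannot be faithfully represented. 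Moreover, the elements of $Y$ and $X$ in \PPM{} are \emph{pairs}, so each would need a root with two pendant stars, forcing $P$ to have depth $3$ from its centre and hence diameter up to $6$, not $4$ --- exactly the tightness you are after breaks. The paper's route is a \emph{direct} reduction from \InclusiveSetCover{} to \TMC{}: since the inclusive condition $v \geq f((v,i))$ is a constraint on integers (a total order), it is encodable by star sizes, and the construction adds heavy padding (stars with $n^3$ and $n^2$ leaves, and $3n^4$ pendant leaves at both centres) whose sole purpose is to force any embedding to map centre to centre and to spend exactly $m-k$ of the host's depth-$2$ subtrees on the padding stars, so that the remaining $k$ subtrees spell out a cover. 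Dropping \PPM{} from the chain and designing these forcing gadgets is what your plan is missing.

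On the algorithmic side, your treatment of $\diam(P)\le 3$ is fine (such a $P$ is a vertex, an edge, a star, or a double star, each easy to test). But for $\diam(T)\le 5$ the claim that the residue is ``a bipartite matching or flow problem whose capacities record how many pattern branches each host branch can accommodate'' is too optimistic: when several branches of $P$ are embedded into a single depth-$2$ branch of $T$, feasibility is not a cardinality condition --- it asks whether a multiset of stars can be packed into a multiset of stars, with leftovers absorbing the single-vertex legs. This is precisely the ``inclusive'' packing that drives the \NP-hardness in general, and it is polynomial here only via a sorted-greedy exchange argument. The paper devotes its \textsc{Match}, \textsc{EmbedPartial}, and \textsc{EmbedFull} subroutines (the lobster--lobster algorithm) to exactly this point, and also handles the case where the contracted image of $P$'s centre straddles an interval of $T$'s backbone, which a per-branch assignment does not capture; some equivalent of that greedy lemma is needed, and a plain flow formulation does not supply it.
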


When the pathwidths of both trees are $1$ (or equivalently, both trees are \emph{caterpillars}), \TMC{} can be solved in polynomial time~\cite{Gupta:DAM:2005,Miyazaki:ICPRAM:2022}.
Our second contribution is extending the positive result to the case where the pathwidth of $T$ is arbitrary, and proving tight \NP-completeness.
\begin{theorem}
    \TMC{} is \NP-complete if the pathwidths of both trees are at least $2$.
    Otherwise, \TMC{} can be solved in polynomial time.
\end{theorem}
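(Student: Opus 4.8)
The proof splits into the tractable half and the intractable half. For the tractable half I would first use that pathwidth is minor-monotone (it does not increase under taking minors), so $P \preceq T$ forces $\mathrm{pw}(P)\le \mathrm{pw}(T)$. Hence if $\mathrm{pw}(T)=1$ and $\mathrm{pw}(P)\ge 2$ we may answer \textsf{no} immediately, and the only remaining case is $\mathrm{pw}(P)=1$, i.e.\ $P$ is a caterpillar while $T$ is arbitrary. This case subsumes the previously known caterpillar/caterpillar algorithm, so the entire tractable half reduces to a single task: decide $P\preceq T$ in polynomial time when $P$ is a caterpillar.

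For that task the plan is to convert tree-minor testing into a path-packing problem on $T$. Write the spine of $P$ as $s_1,\dots,s_k$ and let $\ell_i$ be the number of legs attached to $s_i$. The structural claim I would prove is: $P\preceq T$ iff there is a path $Q$ in $T$ and a partition of $V(Q)$ into $k$ consecutive nonempty segments $G_1,\dots,G_k$ such that, for every $i$, the number of edges leaving $G_i$ to $V(T)\setminus V(Q)$ is at least $\ell_i$. The forward direction takes any minor model and contracts each spine branch set onto the unique backbone sub-path it spans, so that the subtrees that hung inside the branch set become off-path branches available to host the legs; the converse reads a model directly off $Q$, using the roots of $\ell_i$ distinct hanging subtrees of $G_i$ as the leaf branch sets (single vertices suffice). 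Given this characterization, one fixes each of the $O(n^2)$ paths $Q$ (and both orientations of the spine) and greedily cuts it from one end, assigning to segment $i$ the shortest prefix whose count of leaving edges reaches $\ell_i$; the greedy is optimal for a fixed path because minimizing the vertices spent on each segment only helps later segments. For the stated running time I would instead present a rooted tree DP whose state records the current vertex, the index of the spine segment currently being filled, and the residual capacity still needed for it, combining two downward paths at their apex; its polynomiality is immediate.

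For the intractable half I would construct, from an \NP-complete source problem, instances in which both trees branch at depth two and therefore contain a subdivided claw $S_{2,2,2}$, which certifies pathwidth at least $2$. Concretely I would reduce from a packing/covering problem (a set-cover or numerical-matching variant of the kind underlying the diameter dichotomy) and encode elements and sets by pendant spiders, i.e.\ stars whose legs are each subdivided once, so that both trees are automatically non-caterpillars; the legs must have length exactly $2$, which breaks the caterpillar property while keeping the pathwidth equal to $2$, making the result tight against the pathwidth-$1$ tractable case. The main obstacle is the one endemic to minor-containment hardness: designing a rigid anchor that \emph{forces} the spider gadget of $P$ to be realized on the matching spider gadget of $T$ in every minor model, so that the residual embedding is precisely the source instance, while not increasing the maximum leg length beyond $2$. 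I expect to enforce this with a high-degree anchor vertex whose degree exceeds the number of leaving edges any branch set missing it could supply, leaving the anchor of $P$ no choice but to map onto the anchor of $T$; verifying this degree versus leaving-edge bound inside a tree, and confirming that the anchor does not push the pathwidth above $2$, is the delicate point.
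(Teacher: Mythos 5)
Your positive half opens the same way as the paper's: pathwidth is minor-monotone, so $\mathrm{pw}(T)=1$, $\mathrm{pw}(P)\ge 2$ is a trivial no, and the one substantial tractable case is a caterpillar $P$ inside an arbitrary tree $T$, attacked by enumerating $O(n^2)$ paths of $T$ and greedily cutting each into consecutive segments (this is exactly the paper's Algorithm~\ref{alg:cattree}). However, your structural characterization is false. You charge a segment $G_i$ with the number of edges leaving it towards $V(T)\setminus V(Q)$, but a spine branch set may step off the path $Q$ into a hanging subtree, where a single attachment edge unlocks several legs. Concretely, let $P$ have spine $s_1s_2s_3$ carrying $1,2,1$ legs respectively, and let $T$ be the path $a''\,a\,b\,c\,d\,e\,e''$ plus a vertex $x$ adjacent to $c$ and two leaves $x_1,x_2$ adjacent to $x$. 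Then $P$ is a minor of $T$ via spine branch sets $\{a\},\{b,c,x\},\{d,e\}$ with legs $a''$; $x_1,x_2$; $e''$. Yet no path $Q$ of $T$ admits a partition into three consecutive nonempty segments with at least $1,2,1$ leaving edges: a segment can collect two leaving edges only if it contains $x$ or ends at an extremity of $Q$, and any path containing $x$ has at most one vertex (a leaf of $T$) beyond $x$, so the third segment is always empty or has no leaving edge; paths avoiding $x$ see only the single edge $cx$ in place of the two middle legs. So your greedy answers no on a yes-instance. The repair is precisely what the paper does: charge segment $i$ with the number of \emph{leaves} of the components of $T-E[Q]$ that it meets, rather than with attachment edges; with that quantity both directions of your equivalence, and the same greedy, go through.

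Your hardness half rests on a construction that cannot stay at pathwidth $2$ -- the very point you flag as delicate but leave unresolved. For trees, $\mathrm{pw}(T)\ge k+1$ as soon as some vertex $v$ has three components of $T-v$ with pathwidth $\ge k$; hence an anchor with three or more pendant spiders whose legs have length $2$ (each such spider has pathwidth $2$) forces pathwidth at least $3$. This matters because \NP-completeness on instances with both pathwidths at least $2$ already follows from the diameter construction of Theorem~\ref{thm:hardness:diam}, where $T$ has pathwidth $3$ and $P$ has pathwidth $2$; the entire content of the pathwidth theorem is hardness with both pathwidths \emph{exactly} $2$, and that is what spider gadgets hanging off a high-degree anchor cannot deliver, since in a pathwidth-$2$ tree at most two subtrees pendant on any vertex may fail to be caterpillars. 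The paper escapes this by making \emph{every} pendant subtree a caterpillar and pushing all encoding power into leaf patterns along caterpillar spines (its \emph{order caterpillars}, which represent elements of a poset), reducing from a tailor-made \NP-complete intermediate problem (\PPM{}); the root-to-root anchoring is then forced by counting long vertex-disjoint paths rather than by a degree argument -- at pathwidth $2$, degree is cheap, but additional pathwidth-$2$ branches are not.
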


As evident from the theorem above, a caterpillar is an important class to consider when studying the tractability of \TMC{}.
The \emph{path eccentricity} is known as a more direct parameter to express ``caterpillar-likeness,''~\cite{Gomez:DAM:2023, Mitchell:TS:1981} which is defined as the distance from a specific path to the farthest vertex. 
The path eccentricity of a caterpillar is $1$, and a tree of a path eccentricity $2$ is called a \emph{lobster}. 
Our third contribution is the following.
\begin{theorem}
    \TMC{} is \NP-complete if the path eccentricities of $T$ and $P$ are at least $3$ and $2$, respectively. 
    Otherwise, \TMC{} can be solved in polynomial time.
\end{theorem}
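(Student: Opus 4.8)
The statement is a dichotomy, so the plan splits into a hardness construction and a polynomial-time algorithm, with the boundary dictated by \Cref{tab:summary}: the only meaningful tractable cases are $\mathrm{pe}(P)\le 1$ (so $P$ is a caterpillar) and $\mathrm{pe}(P)=2$ together with $\mathrm{pe}(T)\le 2$ (both trees are lobsters), while the hard region is $\mathrm{pe}(T)\ge 3$ and $\mathrm{pe}(P)\ge 2$. Membership in \NP{} is immediate: a minor model is a family of vertex-disjoint connected branch sets in $T$, one per vertex of $P$, and such a witness can be guessed and verified in polynomial time.

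For hardness I would give a polynomial reduction from the paper's intermediate covering problem (\InclusiveSetCover{}, or equivalently \PPM{}), assumed to be \NP-hard. The pattern $P$ is built as a lobster: a long spine carrying, for each demand of the instance, legs that are paths of length $1$ or $2$ whose multiplicities encode the required set sizes; the depth-$2$ legs force $\mathrm{pe}(P)=2$ exactly. The host $T$ is built as a tree of path eccentricity exactly $3$: a spine whose vertices carry radius-$3$ gadget subtrees, one per available set, where the single extra unit of radius ($3$ against $2$) supplies the ``slack'' that contractions need to realize a chosen assignment. The correctness claim is that $P$ is a minor of $T$ iff the covering instance is a yes-instance: a feasible cover routes each demand-leg of $P$ into the corresponding set-gadget of $T$, contracting the slack, while conversely any branch-set decomposition, restricted to the spine and its legs, reads off a feasible cover. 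I would close this part by checking that the two path-eccentricity values of the output are exactly $3$ and $2$.

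For the tractable side, when $\mathrm{pe}(P)\le 1$ the pattern $P$ is a caterpillar, and the positive part of the preceding (pathwidth) theorem, which already solves caterpillar $P$ against an arbitrary $T$, applies verbatim; nothing new is needed. The genuinely new case is $\mathrm{pe}(P)=2$ and $\mathrm{pe}(T)\le 2$, i.e. two lobsters. Here I would fix spines $s(P)$ and $s(T)$ realizing the path eccentricities and run a dynamic program sweeping along $s(T)$. The well-known rooted framework for tree minors reduces each local step to a matching between the child-subtrees of $P$ and vertex-disjoint subtrees of $T$ (allowing edge contractions); in general this matching is exactly what makes \TMC{} \NP-hard. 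The decisive structural fact for lobsters is that every subtree hanging off the spine has radius at most $2$, so only a bounded repertoire of leg shapes occurs (isolated vertices, length-$2$ paths, and stars). Consequently the matching at each spine vertex becomes a small, shape-typed assignment solvable in polynomial time by Hall-type counting. The DP state records how much of $s(P)$ has been embedded and the residual leg budget at the current vertex of $s(T)$, and the transition combines extending along the spine, assigning legs of $P$ to legs of $T$, and contracting $T$-legs into the spine.

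The main obstacle I anticipate is the completeness of this DP, specifically its interaction with contractions. A minor model need not send $s(P)$ onto $s(T)$ literally: contractions in $T$ can absorb legs into the spine, and a single leg of $P$ may be realized partly on the spine and partly inside a leg of $T$. The crux is therefore to enumerate exactly the finitely many local patterns by which one spine vertex of $T$ together with its legs can host a fragment of $P$'s spine and legs, and then to prove that the relaxed matching is simultaneously sound (every accepted pattern extends to a genuine model) and complete (every model decomposes into such local patterns). Carrying out this case analysis in full while keeping each per-step matching polynomial is where the real work lies; the hardness construction, by contrast, is mostly a matter of tuning the gadget radii so that the path eccentricities land on exactly $3$ and $2$.
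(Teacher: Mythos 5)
Your overall architecture coincides with the paper's: \NP-membership is trivial, hardness comes from the set-cover-type intermediate problem on instances with path eccentricities exactly $3$ and $2$, and the tractable side splits into the caterpillar case ($\mathrm{pe}(P)\le 1$, \cref{alg:cattree}) and the lobster--lobster case ($\mathrm{pe}(T)\le 2$, \cref{alg:lobster}). However, both technically substantive components are left as plans, and each hides a genuine gap. On the hardness side you propose to design a fresh spine-based reduction from \InclusiveSetCover{} and to ``tune the gadget radii,'' but no gadget is actually specified and both directions of correctness are asserted rather than argued. Moreover, this effort is unnecessary: since $\mathrm{pe}(T)\le k$ whenever $\diam(T)=2k$, the instances already built for the diameter dichotomy (\cref{thm:hardness:diam}, with $\diam(T)=6$ and $\diam(P)=4$: a depth-$3$ host and a star-of-stars pattern) have path eccentricities $3$ and $2$, so the path-eccentricity hardness is an immediate corollary of the diameter theorem---this is exactly how the paper obtains it, with no new reduction at all.

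On the algorithmic side, the lobster--lobster case is where the real content of the theorem lies, and your proposal explicitly stops at the crux: you observe that a minor model need not map the spine of $P$ onto the spine of $T$---legs of $T$ can be contracted into the spine, and a leg of $P$ can be realized partly on the spine---and you then defer ``enumerating the local patterns'' and proving your relaxed matching sound and complete. That enumeration and proof is precisely what the paper supplies. It isolates the rooted subproblem \textsc{D2M} (embed a rooted tree of depth at most $2$ into a rooted lobster with root mapped to root), solves it in polynomial time by a greedy size-matching of legs (procedure \textsc{Match}) wrapped in a case analysis on whether the root $r_T$ lies on a backbone of $T$, is adjacent to one, or is at distance two from one (procedures \textsc{EmbedPartial} and \textsc{EmbedFull}); then \cref{alg:lobster} guesses a backbone of $T$ and greedily assigns to each backbone vertex $b_i$ of $P$ the shortest segment of $T$'s backbone whose contraction admits a \textsc{D2M} embedding of $P_i$, with correctness resting on a monotonicity/exchange argument (enlarging a segment never destroys an embedding). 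Until you carry out this case analysis and the correctness proof of the sweep, your DP is a conjecture rather than an algorithm, so the proposal does not yet establish either half of the dichotomy.
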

By definition, for a tree, the path eccentricity is at most the pathwidth. Therefore, the positive result for the case where both $T$ and $P$ have path eccentricity of $2$ can be seen as encompassing cases that were not covered by considering the dichotomy for pathwidth.

\subsection{Related Work}

The most significant result concerning the minor containment problem is probably the Graph Minor Theory developed by Robertson and Seymour~\cite{Robertson:CT:95}. They proved that the minor containment problem can be solved in $f(H)\cdot O(|V(G)|^3)$-time, where $f$ is some computable function. Using this algorithm, they proved the existence of an algorithm that determines whether a graph $G$ satisfies any minor-closed property in $O(|V(G)|^3)$ time. Kawarabayashi, Kobayashi, and Reed improved this time complexity to $O(|V(G)|^2)$~\cite{kawarabayashi2012disjoint}.

Matou\v{s}ek and Thomas proved that this problem remains \NP-complete even on trees with bounded diameters~\cite{MATOUSEK1992343}. Furthermore, they addressed the minor containment problem on graphs with treewidth $k$, and provided a polynomial-time algorithm for cases where $H$ is connected and its  degree is bounded and 
Gupta et al. provided a polynomial-time algorithm for the case where both $G$ and $H$ are $k$-connected and have pathwidth at most $k$~\cite{Gupta:DAM:2005}. Their results can also be applied to the subgraph isomorphism problem and the topological embedding problem.

A generalization of \TMC{}, called the \emph{tree inclusion problem}, has also been investigated. In this problem, we are given two rooted trees, $T$ and $P$, with labeled vertices, and the objective is to determine whether it is possible to repeatedly contract vertices of $T$ towards their parent until $T$ matches $P$, including the labels. The special case where all vertices have the same label corresponds to \TMC{}. Kilpel\"{a}inen and Mannila showed that there is an \FPT-time algorithm parameterized by the maximum degree of $P$.
It runs in $O(4^{\deg(P)}\cdot \mathrm{poly}(n))$ time~\cite{Pekka:SIAM:1995}, and
Akutsu et al.\ improved this result to $O(2^{\deg(P)}\cdot \mathrm{poly}(n))$ time, where $\deg(P)$ is the maximum degree of $P$~\cite{Akutsu:TCS:2021}.
Miyazaki, Hagihara, and Hirata have provided a polynomial-time algorithm for the case where both $T$ and $P$ are caterpillars \cite{Miyazaki:ICPRAM:2022}. Additionally, Kilpel\"{a}inen and Mannila have proved that the problem remains \NP-complete even when $T$ has depth $3$ \cite{Pekka:SIAM:1995}. However, it should be noted that their proof relies on the existence of labels, so it does not directly imply our \NP-completeness result for the \TMC{} for trees with bounded diameters.

As another generalization of \TMC{}, the problem of finding the smallest tree containing two trees as minors is also investigated. For this problem, Nishimura, Ragde, and Thilikos gave an \FPT-time algorithm parameterized by the maximum degree~\cite{Naomi:IJFCS:2000}.

\section{Preliminaries}
Let $T$ be a tree and $n$ be the number of vertices or nodes in $T$.
We denote the set of vertices and edges of $T$ as $V(T)$ and $E(T)$, respectively.
For a vertex $v$, the set of vertices adjacent to $v$ is the \emph{neighbors of $v$} and denoted by $N_T(v)$.
The cardinality of the neighbor of $v$ is the \emph{degree of $v$} and is denoted by $\deg_G(v)$.
Moreover, the \emph{degree of $G$} is defined by $\max_{v \in V} \deg_G(v)$ and denoted by $\deg(G)$.
For two vertices $u, v \in V$, the \emph{distance} between $u$ and $v$ is the length of a shortest $u$-$v$ path.
We denote the distance between $u$ and $v$ as $\mathrm{dist}(u, v)$.
The \emph{diameter} of a tree $T$, denoted by $\diam(T)$, is the maximum distance between two vertices in $T$.
For a set of edges $F$, we denote an edge-induced subgraph $T[E \setminus F]$ as $T - F$.
Similarly, we denote an induced subgraph $T[V \setminus U]$ as $T - U$.
For a tree $T$ and a set of vertices $U$, 
\emph{vertex contraction} $T \slash U$ is the graph obtained by considering all vertices in $U$ identical.
More precisely, $V(T \slash U) \coloneqq (V \setminus U) \cup \{w\}$ and $E(T\slash U) \coloneqq \{ \{u, v\} \mid \{u, v\} \in E(T)\ \land u, v \in V(T \slash U)\} \cup \{ \{w, v\} \mid v \in V(T \slash U) \land \exists u \in U, \{u, v\} \in E(T)\}$.
For two disjoint trees $T = (V, E)$ and $P = (U, F)$, we denote the forest $(V \cup U, E \cup F)$ as $T \cup P$.

A tree $T$ is \emph{caterpillar} if $T$ becomes a path by removing all leaves in $T$.
Moreover, $T$ is \emph{lobster} if $T$ becomes a caterpillar by removing all leaves in $T$.
As a generalization of lobsters, a tree $T$ is \emph{$k$-caterpillar} if $T$ becomes a path by removing all leaves $k$ times.
We call the minimum value of $k$ \emph{path eccentricity} of $T$.
Therefore, $T$ is a path if and only if $k = 0$, $T$ is a caterpillar if and only if $k \le 1$, and $T$ is a lobster if and only if $k \le 2$.
A path $P$ is a \emph{backbone} of a $k$-caterpillar $T$ if for any $v \in T$, $P$ has a vertex $u$ such that $\mathrm{dist}(u, v) \le k$.

We next define the \emph{pathwidth} of $T = (V, E)$.
The pathwidth of $T$ is defined by a \emph{path decomposition} of $T$.
A path decomposition of $T$ is a pair $(\mathcal X, P)$, where $P = (V_P, E_P)$ is a path and $\mathcal X = \{\mathcal X_i \mid i \in V_P \}$ is a family of subsets of $V$, called \emph{bags} that satisfies the following conditions.
(I)   $\bigcup_{i \in V_P} \mathcal X_i = V$,
(II)  for each edge $e \in E$, there is a bag $\mathcal X_i$ such that $T[\mathcal X_i]$ contains $e$, and
(III) for all $v \in V$, we define the set of vertices $U \coloneqq \{ i \in V_P \mid v \in \mathcal X_i \}$ and $P[U]$ is connected.
For a path decomposition $(\mathcal X, P)$, the width of this decomposition is defined by $\max_{\mathcal X_i \in \mathcal X} |\mathcal X_i| - 1$.
Moreover, the \emph{pathwidth} of $T$ is the minimum width of any path decomposition.
We denote it as $\mathrm{pw}(T)$.

A tree $P$ is a \emph{minor} of a tree $T$ if there exists a surjective map called \emph{minor embedding} (or simply embedding) $f\colon T\to P$ such that
\begin{itemize}
    \item for all $v\in V(P)$, the subgraph of $T$ induced by $f^{-1}(v)$ is connected, and
    \item for all $e=(u,v)\in E(P)$, there exists an edge $e'=(u',v')$ of $T$ such that $f(u')=u$ and $f(v')=v$.
\end{itemize}
If $P$ is a minor of $T$, we say that $T$ \emph{contains} $P$ as a minor.

Finally, we give the definition of the problem addressed in this paper.

\defproblem{\TMC}{Two trees $T$ and $P$.}{Is $P$ a minor of $T$?}

Theorems and lemmas marked with ($\ast$) are shown in the appendix due to space limitation.
\section{\NP-completeness of \TMC{}}
We show that \TMC{} is \NP-complete even if 
diameters of $T$ and $P$ are at least $6$ and $4$, respectively, or
pathwidths of $T$ and $P$ are at least $2$.
In \Cref{subsec:hardness:diameter}, we show that \TMC{} is \NP-complete 
if diameters of $T$ and $P$ are at least $6$ and $4$, respectively.
Moreover, in \Cref{subsec:hardness:pw}, we show that \TMC{} is \NP-complete if pathwidths of $T$ and $P$ are at least $2$.

\subsection{Bounded Diameter and Bounded Path Eccentricity}\label{subsec:hardness:diameter}

In this subsection, we improve the previous bound in \cite{MATOUSEK1992343}.
To this end, we show the \NP-completeness of \InclusiveSetCover{}, a variant of \textsc{Set Cover}.
To define \InclusiveSetCover, we introduce some notations.
The \emph{disjoint union} of two sets $A$ and $B$ is, denoted by $A \dun B$, $\setdef{(a, 0)}{a\in A}\cup \setdef{(b,1)}{b\in B}$.
The \emph{disjoint union} of a family of sets $\mathcal A = (A_i)_{i\in \lambda}$ is $\bigcup_{i\in \lambda}\setdef{(a, i)}{a\in A_i}$, and denoted by $\bigdun_{i \in \lambda} A_i$ or simply $\bigdun \mathcal A$.
We often consider an element $(x,i) \in A \dun B$ (or $(x,i) \in \bigdun_{i \in \lambda} A_i$) simply as an element $x \in A \cup B$ (or $x \in \bigcup_{i \in \lambda} A_i$ respectively).
We are ready to define \InclusiveSetCover{}.

\defproblem{\InclusiveSetCover}{A set $U = \{1, 2, \dots, n\}$, a collection of $m$ sets $\mathcal{S} \subseteq 2^{U}$, and an integer $k \in \Nset$.}{Does there exist $\mathcal{R} \subseteq \mathcal{S}$ such that $|\mathcal{R}| \leq k$ and there is a surjection $f \colon \bigdun \mathcal R \to U$ such that $v \geq f((v,i))$ for each $(v,i) \in \bigdun\mathcal R$?}

\begin{lemmarep}
    \InclusiveSetCover~is \NP-complete.
\end{lemmarep}
\begin{proof}
It is clear that this problem is in \NP.
To show the \NP-hardness, we give a reduction from \textsc{3-SAT}, which is known to be \NP-hard\cite{Garey:book:1990}.

Let us consider an instance $\angles{V, \mathcal{C}}$ of \textsc{3-SAT}, where $V$ is a set of variables $\{x_1, x_2, \dots, x_{|V|}\}$ and $\mathcal{C}$ is a set of clauses $\{C_1, C_2, \dots, C_{|\mathcal{C}|}\}$.
From this instance, we reduce to an instance $\angles{U, \mathcal{S}, k}$ of \InclusiveSetCover~in the following way.
Let $U$ be the set $\{1, 2, \dots, 2|V| + 3|\mathcal{C}|\}$.
For each integers $x_i$, we define the set of integers $T_i, F_i$ as follows, where $\alpha =|V| + 3|\mathcal{C}|$.
Let $T_i$ be $\{\alpha-i+1, \alpha+i\} \cup \{3j \mid x_i \in C_j\} \cup \{3j-1 \mid \bar{x}_i \in C_j\}$ and $F_i$ be $\{\alpha-i+1, \alpha+i\} \cup \{3j \mid \bar{x_i} \in C_j\} \cup \{3j-1 \mid x_i \in C_j\}$.
Let $\mathcal{S}$ be the collection of the above sets $\{T_1, T_2, \dots, T_{|V|}, F_1, F_2, \dots, F_{|V|}\}$.
Lastly, let $k$ be $|V|$.
The above instance can be constructed in polynomial time.

Note that we use three integers $\{3j - 2, 3j-1, 3j\} \subseteq U$ for each clause $C_j$ and $\{\alpha-i+1, \alpha+i\}$ for each variable $x_i$.
In total, we use $\{1, 2, \dots, 3|\mathcal{C}|\}$ for the clauses and $\{3|\mathcal{C}|+1, 3|\mathcal{C}|+2, \dots, 2|V| + 3|\mathcal{C}|\} = \{\alpha - |V| + 1, \alpha - |V| + 2, \dots, \alpha + |V|\}$ for the variables.

\proofsubparagraph{Completeness}

\newcommand{\AssignmentVariable}{\varphi}
\newcommand{\ValueTrue}{\mathrm{true}}
\newcommand{\ValueFalse}{\mathrm{false}}

We show that if $\angles{V, \mathcal{C}}$ is satisfiable, then $\angles{U, \mathcal{S}, k}$ is a yes-instance.
Let $\AssignmentVariable \colon V \to \{\ValueFalse, \ValueTrue\}$ be an assignment that satisfies $\angles{V, \mathcal{C}}$.

From $\AssignmentVariable$, we obtain the solution $\mathcal{R}$ for $\angles{U, \mathcal{S}, k}$ by selecting $T_i$ if $\AssignmentVariable(x_i)$ is $\ValueTrue$ and $F_i$ otherwise for each $x_i$.
Since $\AssignmentVariable$ satisfies each clause $C_j$, we have three integers from $\{3j-1, 3j\}$ and at least one of them must be $3j$.
Allocating them to $\{3j-2, 3j-1, 3j\}$, we cover $\{1, 2, \dots, 3|\mathcal{C}|\}$.
In addition, since we have selected $T_i$ or $F_i$ for all $i$, $|\mathcal{R}|$ is $k$ and we cover $\{\alpha - |V| + 1, \alpha - |V| + 2, \dots, \alpha + |V|\}$.
Therefore, $\mathcal{R}$ is a solution for $\angles{U, \mathcal{S}, k}$.

\proofsubparagraph{Soundness}

We show that if $\angles{U, \mathcal{S}, k}$ is a yes-instance then $\angles{V, \mathcal{C}}$ is satisfiable.
Let $\mathcal{R}$ be a solution for $\angles{U, \mathcal{S}, k}$.
We first show the following claim.
\begin{claim}\label{claim:isc-solution-characterization}
$\mathcal{R}$ contains exactly one of $T_i, F_i$ for each $i$.
\end{claim}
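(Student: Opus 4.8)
The plan is to exploit the two-tier structure of the ground set $U$: the $3|\mathcal{C}|$ \emph{clause elements} $\{1,\dots,3|\mathcal{C}|\}$ and the $2|V|$ \emph{variable markers} $\{\alpha-|V|+1,\dots,\alpha+|V|\} = \{3|\mathcal{C}|+1,\dots,2|V|+3|\mathcal{C}|\}$. The crucial feature of the reduction is that the two markers of variable $x_i$, namely $\alpha-i+1$ and $\alpha+i$, are the only elements exceeding $3|\mathcal{C}|$ that occur in $T_i$ or in $F_i$; every clause element is at most $3|\mathcal{C}|$, hence strictly smaller than every marker (note $\alpha-|V|+1 = 3|\mathcal{C}|+1$). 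Consequently a clause element can never serve as a source for a marker target.

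The single tool I would use is the following counting inequality, immediate from the defining property of the surjection $f$: since distinct targets need distinct preimages and $v \ge f((v,i))$, for every threshold $\tau$ the number of elements $v \ge \tau$ occurring (with multiplicity) in $\bigdun \mathcal{R}$ is at least the number of elements of $U$ that are $\ge \tau$. Writing $a_i \in \{0,1,2\}$ for the number of sets of the pair $\{T_i, F_i\}$ lying in $\mathcal{R}$, I would first apply this at $\tau = \alpha+1$: the only sources reaching this threshold are the big markers $\alpha+i$, exactly one per chosen set, so the inequality reads $|\mathcal{R}| \ge |V|$; together with $|\mathcal{R}| \le k = |V|$ this pins down $|\mathcal{R}| = |V|$ and $\sum_i a_i = |V|$.

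Next I would run the same inequality across the whole band of markers. Applying it at $\tau = \alpha+j+1$ counts $|V|-j$ targets and exactly $\sum_{i>j} a_i$ qualifying sources, which rearranges (using $\sum_i a_i = |V|$) to $\sum_{i\le j} a_i \le j$. Applying it at the lower threshold $\tau = \alpha-j+1$ counts $|V|+j$ targets (all $|V|$ big markers plus the $j$ small markers $\alpha-i+1$ with $i\le j$) and $|V| + \sum_{i\le j} a_i$ qualifying sources, which rearranges to $\sum_{i\le j} a_i \ge j$. The two bounds force $\sum_{i\le j} a_i = j$ for every $j$, hence $a_j = 1$ for all $j$, which is precisely the statement that $\mathcal{R}$ contains exactly one of $T_i, F_i$ for each $i$.

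The main obstacle to anticipate is that the big-marker count alone is insufficient: it yields only $\sum_{i\le j} a_i \le j$, and a profile such as $a = (0,2,1)$ satisfies every big-marker inequality while violating the claim. Such a profile is excluded precisely because it cannot cover the small markers $\alpha-i+1$, and capturing this is the role of the lower-threshold inequalities. So the care needed is to verify the separation $3|\mathcal{C}| < \alpha-|V|+1$ (so clause elements never cover marker targets) and to compute the two threshold counts exactly; once both inequality families are in hand, the conclusion follows with no further case analysis.
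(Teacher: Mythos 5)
Your proof is correct and uses essentially the same argument as the paper: counting, at suitable thresholds, the marker elements of $\bigdun \mathcal R$ available as sources against the elements of $U$ that must be covered, relying on the facts that clause elements lie strictly below all markers and that each chosen set contributes exactly one marker above $\alpha$ and one below. The paper packages this as a contradiction at a single index $i$ for which neither $T_i$ nor $F_i$ is chosen, whereas you run the same two threshold counts at every index and combine them into prefix-sum equalities; this is an organizational rather than conceptual difference.
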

\begin{claimproof}
Let us assume that $\mathcal{R}$ does not satisfy the above condition.
Then, since $k = |V|$, there exists $i$ such that $\mathcal{R}$ does not contain both of $T_i, F_i$.
That is, $\alpha+i$ does not appear in $\mathcal{R}$ and therefore in $\bigdun \mathcal R$ there are at least $|V|-i+1$ integers greater than $\alpha+i$.
Therefore, by the definitions of $T_i$ and $F_i$, in $\bigdun \mathcal R$ there are also at least $|V|-i+1$ integers less than $\alpha-i+1$.
However, since $k = |V|$ we only have at most $2|V|$ integers in $\bigdun \mathcal R$ to cover $\{\alpha - |V| + 1, \alpha - |V| + 2, \dots, \alpha + |V|\}$.
Hence we only have at most $|V| + i - 1$ integers to cover $|V| + i$ integers $\{\alpha - i + 1, \alpha - i + 2, \dots, \alpha + |V|\}$, which is a contradiction.
\end{claimproof}
Now, we construct an assignment $\AssignmentVariable \colon V \to \{\ValueFalse, \ValueTrue\}$ that satisfies $\angles{V, \mathcal{C}}$.
We simply assign $\ValueTrue$ to $x_i$ if $T_i \in \mathcal{R}$ and $\ValueFalse$ otherwise.

By \Cref{claim:isc-solution-characterization}, in $\bigdun \mathcal R$ there are exactly $2|V|$ integers greater than $3|\mathcal{C}|$.
That is, we have $3|\mathcal{C}| \in \bigdun \mathcal R$, which implies that $\AssignmentVariable$ satisfies clause $C_{|\mathcal{C}|}$.
In addition, by the definitions of $T_i$ and $F_i$ and \Cref{claim:isc-solution-characterization}, in $\bigdun \mathcal R$ there are exactly three integers from $\{3|\mathcal{C}| - 2, 3|\mathcal{C}| - 1, 3|\mathcal{C}|\}$, which must cover them.
Therefore, we have $3|\mathcal{C}| - 3 \in \bigdun\mathcal R$, and recursively we can prove that $\AssignmentVariable$ satisfies $C_{|\mathcal{C}|-1}, C_{|\mathcal{C}|-2}, \dots, C_1$.
\qed
\end{proof}

\begin{theorem}\label{thm:hardness:diam}
    \TMC{} is \NP-complete even if the diameters of $T$ and $P$ are at least $6$ and $4$, respectively.
\end{theorem}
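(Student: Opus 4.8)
The plan is to reduce from \InclusiveSetCover{}, which the preceding lemma shows to be \NP-complete; membership of \TMC{} in \NP{} is immediate, since a minor embedding $f\colon T\to P$ is a polynomial-size certificate that can be checked in polynomial time. Given an instance $\angles{U,\mathcal S,k}$ with $U=\{1,\dots,n\}$, I would build two trees so that ``$P$ is a minor of $T$'' captures exactly the existence of a size-$k$ inclusive cover. The guiding idea is to encode a numeric value $x$ as the size of a star, i.e.\ as a vertex carrying $x$ pendant leaves, and to exploit the fact that a star with $v$ leaves contains a star with $u$ leaves as a minor if and only if $v\ge u$ (contract the surplus $v-u$ leaves into the centre). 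This makes the inclusiveness condition $v\ge f((v,i))$ of \InclusiveSetCover{} correspond precisely to a single minor contraction.

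Concretely, $P$ would be the depth-$2$ ``spider'': a centre $r_P$ whose children $a_1,\dots,a_n$ represent the universe, with $a_u$ carrying $u$ pendant leaves. This tree has diameter exactly $4$, since a leaf-to-leaf path runs leaf--$a_u$--$r_P$--$a_{u'}$--leaf. The host $T$ would be the depth-$3$ analogue: a centre $r_T$ with one branch $b_S$ per set $S\in\mathcal S$, where $b_S$ carries, for each $v\in S$, an element-vertex $c_{S,v}$ with $v$ pendant leaves; the longest path then runs leaf--$c_{S,v}$--$b_S$--$r_T$--$b_{S'}$--$c_{S',v'}$--leaf and has length $6$. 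In the intended \emph{canonical} embedding, $r_T$ maps to $r_P$, each universe gadget $a_u$ is served by a single element $c_{S,v}$ with $v\ge u$ (its surplus leaves contracted into it), and the remaining vertices are absorbed into the branch set of $r_P$. Reading this correspondence backwards turns a valid embedding into a choice of at most $k$ sets together with an injective value-dominating assignment of the universe, i.e.\ an \InclusiveSetCover{} solution.

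The completeness direction (solution $\Rightarrow$ embedding) is routine: from a cover $\mathcal R$ and a surjection $f$ I would build the canonical embedding above, routing each universe gadget to an element witnessing $v\ge f$. The soundness direction is where the real work lies, and I expect it to be the main obstacle, for two reasons. First, I must rule out \emph{degenerate} embeddings that cross the level structure: an embedding mapping a universe gadget $a_u$ onto a whole set-vertex $b_S$ (using $|S|$ distinct children as the $u$ leaves) would let a set cover $a_u$ by mere cardinality, bypassing the value constraint $v\ge u$. I would block this by rigidifying the levels with large \emph{anchor} stars---pendant stars of size larger than any value in the instance---attached so that $r_P$ and the element level are forced to their intended images. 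Second, and most delicate, I must enforce the budget $|\mathcal R|\le k$: because any subtree hanging below the centre can be freely contracted into the branch set of $r_P$, one cannot simply ``forbid'' extra sets, so the bound has to be forced through a dedicated budget gadget whose coverage demand can only be met when at most $k$ of the set-branches actually serve universe gadgets. Choosing the sizes of the anchor and budget gadgets so that the \emph{only} admissible embeddings are the canonical ones---and then proving this rigidity---is the crux of the argument; once it is in place the equivalence with \InclusiveSetCover{} follows, and the diameters $6$ and $4$ are read off directly from the construction.
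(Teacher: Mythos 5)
Your reduction skeleton is exactly the paper's: reduce from \InclusiveSetCover{}, encode the value $u$ of a universe element as a star with $u$ pendant leaves attached to a centre $r_P$ (diameter $4$), and encode each set $S$ as a branch of a depth-$3$ tree whose element-vertices carry $v$ leaves for $v\in S$ (diameter $6$), with the canonical embedding using exactly the minor fact that a star with $v$ leaves contains a star with $u$ leaves iff $v\ge u$. However, your writeup has a genuine gap, and you name it yourself: the ``anchor'' gadgets that rigidify the embedding and the ``budget'' gadget that enforces $|\mathcal R|\le k$ are left entirely unspecified, and the rigidity proof is deferred. That is not a routine detail --- it is the entire soundness argument. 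Without it, nothing rules out an embedding that maps a universe star $a_u$ onto a set-vertex $b_S$ together with several of its element-vertices (covering $u$ by cardinality rather than by a single dominating value), or one that splits $r_P$'s preimage across branches, or one that lets arbitrarily many branches serve universe gadgets; any of these breaks the equivalence with \InclusiveSetCover{}.

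For comparison, here is how the paper fills precisely these two holes, with a hierarchy of star sizes $n^4\gg n^3\gg n^2\gg n$. Anchoring: attach $3n^4$ pendant leaves to both centres $t$ (of $T$) and $p$ (of $P$); since every component of $T-\{t\}$ has far fewer than $3n^4$ leaves, any embedding must satisfy $g(t)=p$, and similar counting one level down pins the set-branch roots to their intended images. Budget: rather than ``forbidding'' extra branches from serving, the paper forces consumption of the unused ones --- $P$ gets $m-k$ stars of size $n^3$ and $k$ stars of size $n^2$, while each branch $T_i$ of $T$ gets $n^3$ leaf children plus one star of size $n^2$; the $n^3$-stars of $P$ can only be absorbed by whole branches (one each), so exactly $m-k$ branches are eaten by them, and only the remaining $k$ branches can host universe stars. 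Your vaguely described budget gadget would have to implement essentially this counting trick (or an equivalent one), and the rigidity claims would have to be proved by the leaf-counting arguments above; until you do that, the proof is an outline of the construction plus a correct diagnosis of where the difficulty lies, not a proof.
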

\begin{proof}
    It is clear that this problem is in \NP.
    We show the \NP-completeness of \TMC{} by providing a reduction from \InclusiveSetCover.
    From an instance $\angles{U, \mathcal S, k}$, we construct trees $T$ and $P$ as follows.
    We first explain how to construct $P$.
    We consider stars $R_1, \ldots, R_n$, $X_1, \ldots, X_{m-k}$, and $Y_1 \ldots, Y_k$.
    Each star $R_i$, $X_i$, and $Y_i$ have $i$, $n^3$, and $n^2$ leaves, respectively.
    Moreover, we add one vertex $p$ that connects all the centers in $R_1, \ldots, R_n$, $X_1, \ldots, X_{m-k}$, and $Y_1 \ldots, Y_k$.
    Finally, we add $3n^4$ leaves to $p$ and obtain a tree $P$ with the diameter $4$.

    We next explain how to construct $T$.
    We construct $m$ rooted trees $T_1, \ldots, T_m$ as follows.
    Let $t_i$ be the root of $T_i$ and $S_i$ be a set of integers $\{s^i_1, \ldots, s^i_\ell\}$ in $\mathcal S$.
    Each $T_i$ has $n^3$ leaves as children of $t_i$.
    For each $s^i_j$, we add the star with $s^i_j$ leaves as a child of $t_i$.
    Moreover, we add one star with $n^2$ leaves as a child of $t_i$ and
    one vertex $t$ that connects all the roots in $T_1, \ldots, T_m$.
    Finally, we add $3n^4$ leaves to $t$ and obtain a tree $T$ with the diameter $6$.
    In what follows, for each $T_i$, $R_i$, $X_i$, and $Y_i$, we denote the root of $T_i$, $R_i$, $X_i$, and $Y_i$ as $t_i$, $r_i$, $x_i$, $y_i$, respectively.
    Moreover, we denote the set of subtrees
    $\{T_1, \ldots, T_m\}$, 
    $\{R_1, \ldots, R_n\}$, 
    $\{X_1, \ldots, X_{m-k}\}$, and
    $\{Y_1, \ldots, Y_k\}$ as 
    $\mathcal T$,
    $\mathcal R$,
    $\mathcal X$, and
    $\mathcal Y$, respectively.

    \proofsubparagraph{Completeness.}
    Let $\{S_{a_1}, \ldots, S_{a_k}\}$ be
    a subset of $\mathcal S$ that has
    a surjection $f$ from $\bigdun \mathcal S$ to $U$ satisfying $v \ge f((v,i))$ for each 
    $(v, i) \in \bigdun \mathcal S$.
    In what follows, we assume that $T$ and $P$ are rooted at $t$ and $p$, respectively.
    We give an embedding $g$ from $T$ to $P$ that satisfies $g(t) = p$.
    We pick a subtree $T_{a_i}$ for each $a_i$ and define $g(t_{a_i}) = p$.
    For each integer in $S_{a_i} = \{s^{a_i}_1, \ldots, s^{a_i}_\ell\}$,
    we obtain the set of integers $\bigcup_{s \in S_{a_i}} \{f((s,i))\}$.
    From the construction of $T_{a_i}$, $T_{a_i}$ has $\ell$ stars as subtrees.
    Moreover, $j$-th star has $s^{a_i}_j$ leaves.
    Therefore, we can embed a subtree in $T_{a_i}$ with $s^{a_i}_j$ leaves into a subtree in $P$ with $f((s^{a_i}_j,i))$ leaves since $s^{a_i}_j \ge f((s^{a_i}_j,i))$.
    Moreover, for each $T_{a_i}$, we can embed one subtree in $\mathcal Y$ since $g(t_{a_i}) = p$.
    Therefore we can embed all subtrees in $P$ without each $\mathcal X$.
    For each $j \in \{1, \ldots, n\} \setminus \{a_1, \ldots, a_k\}$, $T_j$ has a subtree with $n^3$ leaves.
    Therefore, each $X$ can be embedded in each $T_j$.
    Finally, since both $t$ and $p$ have $3n^4$ neighbors with the degree $1$,
    $T$ has a $P$ as a minor.

    \proofsubparagraph{Soundness.}
    We first show that any embedding $g: T \to P$ satisfies $g(t) = p$.
    Suppose that $g(t) \neq p$.
    Since $g^{-1}(p)$ does not contain $t$, $g^{-1}(p)$ is contained in a connected component in $T - \{t\}$.
    However, each connected component has at most $n^3 + 3n^2/2$ leaves despite $p$ having $3n^4$ leaves.
    Therefore, each connected component does not contain a star with $3n^4$ leaves as a minor, and $g(t) = p$.
    In what follows, we regard $T$ and $P$ as rooted trees rooted at $t$ and $p$, respectively.

    We next show that $\mathcal T$ has $m-k$ trees $T_i$ that satisfies $g(t_i) = x$ for some $X \in \mathcal X$, where $x$ is the root of $X$.
    Since $g(t) = p$, $g^{-1}(x)$ is contained in some $T_i$.
    If $g(t_i) \neq x$, $g^{-1}(X)$ is contained in a connected component in $T_i - x$.
    However, each connected component in $T_i - x$ has at most $n^2$ leaves despite $X$ having $n^3$ leaves.
    Therefore $g(t_i) = x$ holds.
    Moreover, since $g(t) = p$ and $g(t_i) = x$, $T_i$ has no vertices $v$ such that $g(v) \not\in V(X)$.
    Since $\mathcal X$ has $m-k$ subtrees, $\mathcal T$ has $m-k$ subtrees as above.

    Let $\{T_{a_1}, \ldots, T_{a_k}\}$ be the subtrees in $\mathcal T$ that satisfies $g(t_i) \neq x$ for any $X \in \mathcal X$.
    We show that for any $T_{a_i}$, 
    either $g(t_{a_i}) = p$ or $g(t_{a_i}) = y$ for some $Y \in \mathcal Y$, where $y$ is the root of $Y$.
    If $g(t_{a_i}) \neq p$ and $g(t_{a_i}) \neq y$ for any $Y \in \mathcal Y$,
    $V(T_{a_i}) \setminus \{t_{a_i}\}$ has no vertices $v$ such that $g(v) = y$ since any $v$ does not adjacent to $t$ even if $y$ adjacent to $p$.
    Moreover,  $T_{a_i}$ contains at most one subtree in $\mathcal Y$ even if $g(t_{a_i}) = p$ or $g(t_{a_i}) = y$.
    Since $\mathcal Y$ has $k$ subtrees, any embedding satisfies either $g(t_{a_i}) = p$ or $g(t_{a_i}) = y$.

    From the above discussion, 
    for each $X \in \mathcal X$, $g^{-1}(x)$ contains a child of $t$ and
    for each $Y \in \mathcal Y$, $g^{-1}(y)$ contains a child of $t$.
    Moreover, when $g(t_i) = x$, $T_i$ has no vertex $v$ such that $g(v) \not\in V(X)$.
    Similarly, when $g(t_i) = y$, $T_i$ has no vertex $v$ such that $g(v) \not\in V(Y)$.
    Therefore, for any $R \in \mathcal R$, $g^{-1}(R)$ consists of vertices in $T_i$ satisfying $g(t_i) = p$.
    From the definition of $T_i$, $T_i - \{t_i\}$ has $|S_i| + 1$ stars.
    Since $g(t_i) = p$, $g^{-1}(R)$ is contained in a star in $T_i - \{t_i\}$.
    Therefore, the number of leaves of this star is greater than or equal to the number of leaves of $R$.
    Since $\mathcal T$ has at most $k$ subtrees such that $g(t_i) \neq x$ for any $X \in \mathcal X$,
    if we select $S_i$ if and only if $g(t_i) \neq x$ for any $X \in \mathcal X$,
    the number of sets is at most $k$.
    Moreover, since $g$ is an embedding from $T$ to $P$, these selections from $\mathcal S$ are a solution of $\angles{U, \mathcal S, k}$.
    Therefore, $\angles{U, \mathcal S, k}$ is a yes-instance if $T$ contains $P$ as a minor.
    \qed
\end{proof}

\begin{figure}
    \centering
    \includestandalone[width=\linewidth]{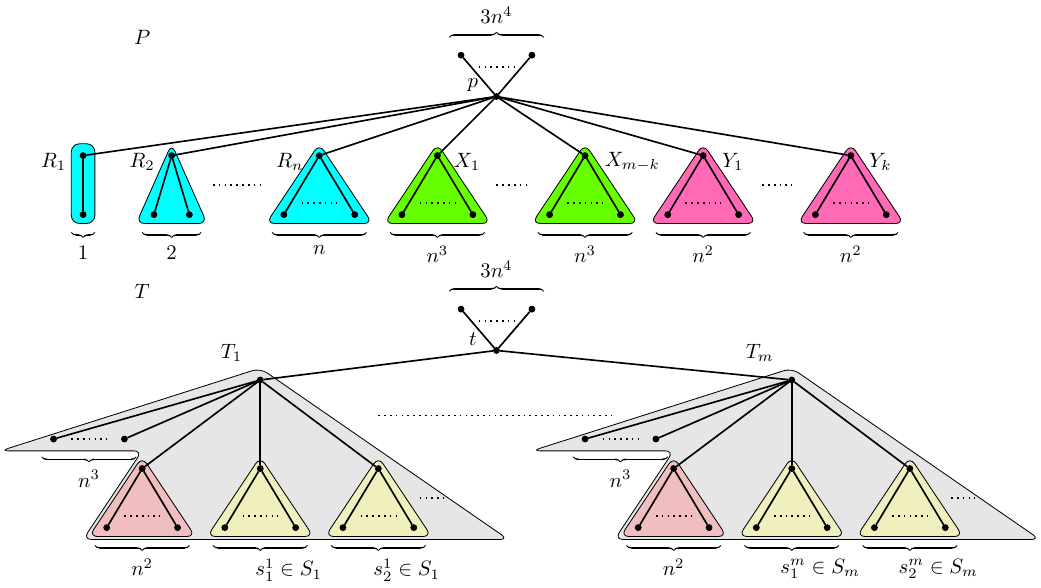}
    \caption{An example of the construction of $T$ and $P$ in the proof of~\cref{thm:hardness:diam}}
    \label{fig:TMC:hardness:construction}
\end{figure}

Since $\mathrm{pe}(T) \le k$ if $\diam(T) = 2k$, we obtain the following corollary.

\begin{corollary}
        \TMC{} is \NP-complete even if the path eccentricities of $T$ and $P$ are $3$ and $2$, respectively.
\end{corollary}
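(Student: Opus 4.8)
The plan is to reuse verbatim the trees $T$ and $P$ built in the proof of \Cref{thm:hardness:diam}, for which $\diam(T)=6$ and $\diam(P)=4$, and to argue that these instances in fact have path eccentricities exactly $3$ and $2$. Since \Cref{thm:hardness:diam} already shows that deciding whether such a $T$ contains such a $P$ as a minor is \NP-complete, pinning down the two path-eccentricity values immediately gives the corollary, as these instances lie in the region $\mathrm{pe}(T)\ge 3$, $\mathrm{pe}(P)\ge 2$.

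For the upper bounds I would invoke the observation stated just before the corollary: if $\diam(T')=2k$ then $\mathrm{pe}(T')\le k$. The one-line reason is that along a diametral path $v_0\cdots v_{2k}$, writing $v_j$ for the path vertex closest to a given vertex $u$, the tree gives $\mathrm{dist}(u,v_0)=\mathrm{dist}(u,v_j)+j$ and $\mathrm{dist}(u,v_{2k})=\mathrm{dist}(u,v_j)+(2k-j)$; since both are at most $\diam(T')=2k$, we get $\mathrm{dist}(u,v_j)\le\min(j,2k-j)\le k$, so the diametral path is a backbone. Applied to our instances this yields $\mathrm{pe}(T)\le 3$ and $\mathrm{pe}(P)\le 2$.

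The substance is the matching lower bounds, and this is the step I expect to be the main obstacle, since a diameter of $6$ does not by itself force path eccentricity $3$ (for example, a length-$2$ backbone with depth-$2$ branches at each end has diameter $6$ but path eccentricity $2$). Here I would exploit that $T$ consists of $m\ge 3$ branches $T_1,\dots,T_m$ hung from the single cut vertex $t$, each branch containing a leaf at distance exactly $3$ from $t$ (e.g.\ a leaf of its $n^2$-leaf star). Fix any backbone path $Q$ of $T$. Because $t$ is a cut vertex, $Q$ can use at most two of the edges incident to $t$, so apart from $t$ itself every vertex of $Q$ lies in at most two components of $T-\{t\}$; as $m\ge 3$, some branch $T_{i_0}$ contains no vertex of $Q$. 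Any path from a depth-$3$ leaf of $T_{i_0}$ to a vertex of $Q$ must leave $T_{i_0}$ through $t$, so the nearest $Q$-vertex is at distance at least $\mathrm{dist}(\text{leaf},t)=3$; hence $\mathrm{pe}(T)\ge 3$. The same argument applied to $P$, whose $n+m\ge 3$ stars hang from the cut vertex $p$ and each contain a leaf at distance $2$, shows $\mathrm{pe}(P)\ge 2$: any backbone misses the center of some star, and a leaf of that star has its only distance-$1$ neighbor (its center) off the backbone, hence is at distance at least $2$ from it.

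Combining the bounds gives $\mathrm{pe}(T)=3$ and $\mathrm{pe}(P)=2$, so the hardness of \Cref{thm:hardness:diam} transfers to these parameter values. The only care needed is to guarantee $m\ge 3$, which holds whenever the underlying \textsc{3-SAT} instance has at least two variables (so that $m=2|V|\ge 4$); this is without loss of generality.
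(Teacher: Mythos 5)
Your proposal is correct, and it is in fact strictly more careful than the paper's own proof. The paper dispatches this corollary in one line: it applies the observation $\mathrm{pe}(T')\le k$ whenever $\diam(T')=2k$ to the instances of \Cref{thm:hardness:diam}, thereby obtaining only the upper bounds $\mathrm{pe}(T)\le 3$ and $\mathrm{pe}(P)\le 2$, and stops there. You prove the same upper bounds the same way, but you additionally supply the matching lower bounds: the cut-vertex argument that any path in $T$ meets at most two components of $T-\{t\}$, so with $m\ge 3$ branches each containing a leaf at depth $3$ some branch is missed entirely and its deep leaves are at distance $\ge 3$ from any backbone (and analogously $\mathrm{pe}(P)\ge 2$), together with the harmless normalization $m=2|V|\ge 4$. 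This extra step is not cosmetic: the corollary asserts hardness for path eccentricities \emph{equal to} $3$ and $2$, and the paper's dichotomy places instances with $\mathrm{pe}(T)\le 2$ (both trees lobsters) in the polynomial-time region, so a reduction whose outputs were known only to satisfy $\mathrm{pe}(T)\le 3$ and $\mathrm{pe}(P)\le 2$ would not by itself establish \NP-hardness of the region $\mathrm{pe}(T)\ge 3$, $\mathrm{pe}(P)\ge 2$. Your verification that the constructed instances have path eccentricities exactly $3$ and $2$ is precisely what is needed to make the corollary (and the hardness side of the path-eccentricity dichotomy) fully rigorous; the paper leaves this implicit.
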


\subsection{Bounded Pathwidth}\label{subsec:hardness:pw} 
In this subsection, we show that \TMC{} is \NP-complete even if the pathwidths of $T$ and $P$ are 2.
To prove this, we first consider the following problem, which we call \PPM{}.

\defproblem{\PPM }{%
A partial ordered set $\angles{U, \le_U}$, a subset $X$ of $U^2$, and a pair $(Y, Z)$ where $Y$ is a subset of $U^2$ and $Z$ is a subset of $U$.}{
Does there exists two injections $f\colon Y \to X$ and $g \colon Z \to X \times \{1,2\}$ such that 
\begin{itemize}
    \item  $f(Y) \cap \setdef{x \in X}{(x, i) \in g(Z)} = \emptyset$,
    \item  if $f((y_1, y_2)) = (x_1, x_2)$ then $(y_1 \le_U x_1) \land (y_2 \le_U x_2)$ or $(y_2 \le_U x_1) \land (y_1 \le_U x_2)$, and 
    \item  if $g(z) = ((x_1,x_2), i)$ then $z \le_U x_i$.
\end{itemize}
}
\begin{lemmarep}
    {\PPM} is \NP-complete.
\end{lemmarep}
\newcommand{\tupvec}[1]{\boldsymbol{#1}}
\begin{proof}

\newcommand{\tval}{\mathrm{true}}
\newcommand{\fval}{\mathrm{false}}
It is clear that this problem is in NP.
We reduce from \textsc{CNF-SAT} with each clause having at most three literals, each variable appears exactly two times as a positive literal and exactly one time as a negative literal in all clauses. This problem is known to be \NP-complete~\cite{Tovey:1984:DAM}.
Let us consider an instance $\angles{V, \mathcal{C}}$ of this \textsc{SAT} variant, where $V$ is a set of variables $\{v_1, v_2, \dots, v_{n}\}$ and $\mathcal{C}$ is a set of clauses $\{C_1, C_2, \dots, C_{m}\}$.
We define $p^i_1$ and $p^i_2$ to be indices of the clause in which the variable $v_i$ appears as positive,
and $n^i$ to be the index of the clause in which the variable $v_i$ appears as negative. 
    From this instance, we reduce to an instance $\angles{\angles{U, \preceq_U}, X, \parens{Y,Z}}$ of \PPM{} in the following way.
\begin{itemize}
    \item  Let $U$ be the $\parens{\mathbb Z \cup \braces{-\infty}}^3$ and $\le$ be the natural order of $\mathbb Z$ with the least element $-\infty$.
    \item For every pairs $\parens{a_0, a_1, a_2}, \parens{b_0, b_1, b_2} \in U$, $\parens{a_0, a_1, a_2} \preceq_U \parens{b_0, b_1, b_2}$ if and only if $a_0 \le b_0$, $a_1 \le b_1$, and $a_2 \le b_2$.
    \item Let $\tupvec x_i = \parens{(i, p^i_1, -p^i_1), (-i, p^i_2, -p^i_2)}$ and $\tupvec x'_i = \parens{(i, n^i, -n^i), (-i, -\infty, -\infty)}$. The set $X$ is defined by $\bigcup_{1\le i \le n} \braces{\tupvec x_i, \tupvec x'_i}$.
    \item  Let $\tupvec y_i = \parens{(i, -\infty, -\infty), (-i, -\infty, -\infty)}$ and $Y$ be $\bigcup_{1 \le i \le n} \braces{\tupvec y_i}$.
    \item  Let $z_i = \parens{-\infty, i, -i}$ and $Z$ be $\bigcup_{1 \le i \le m} \braces{z_i}$.
\end{itemize}
The above instance can be constructed in polynomial time of the size of $\angles{V, \mathcal C}$.
From now, we show that $\angles{V, \mathcal C}$ is satisfiable if and only if $\angles*{\angles{U, \preceq}, X, \parens{Y, Z}}$ is a yes-instance.

\proofsubparagraph{Completeness.}
We show that if $\angles{V, \mathcal C}$ is satisfiable then $\angles*{\angles{U, \preceq_U}, X, \parens{Y, Z}}$ is a yes-instance.
Let $\varphi \colon V \to \braces{\fval, \tval}$ be an assignment that satisfies $\angles{V,C}$.
Then, there is a mapping $\psi\colon \mathcal C \to V$ such that 
if $\varphi(\psi(C)) = \fval$, then clause $C$ contains a variable $\psi(C)$ as a negative literal,
otherwise clause $C$ contains a variable $\psi(C)$ as a positive literal.
Moreover, we can assume that if $\varphi(v) = \fval$, then $|\psi^{-1}(v)| \le 1$, otherwise $|\psi^{-1}(v)| \le 2$.
Two mappings $f \colon Y \to X$ and $g \colon Z \to X \times \braces{1,2}$ is defined by the following:
\begin{align*}
    f(\tupvec y_i) &= \begin{cases}
        \tupvec x_i  & \parens{\varphi(v_i) = \fval} \\
        \tupvec x'_i & \parens{\varphi(v_i) = \tval}
    \end{cases},
    \\
    g(z_i) &= \begin{cases}
        (\tupvec x'_j, 1) & \parens{\psi(C_i) = v_j \land \varphi(v_j) = \fval \land i = n^j} \\
        (\tupvec x_j, 1)  & \parens{\psi(C_i) = v_j  \land \varphi(v_j) = \tval \land i = p^j_1} \\
        (\tupvec x_j, 2)  & \parens{\psi(C_i) = v_j  \land \varphi(v_j) = \tval \land i = p^j_2}
    \end{cases}.
\end{align*}
Since $p^j_1 \neq p^j_2$, two mappings $f$ and $g$ are injective.
From the constructions, $f$ and $g$ satisfy the three required conditions:
$f(Y) \cap \setdef{\tupvec x \in X}{(\tupvec x, i) \in g(Z)} = \emptyset$;
if $f((y_1, y_2)) = (x_1, x_2)$ then $y_1 \preceq_U x_1$ and $y_2 \preceq_U x_2$; and 
if $g(z_j) = ((x_1,x_2), i)$ then $z_j \preceq_U x_i$.

\proofsubparagraph{Soundness.}
We show that if $\angles*{\angles{U, \preceq_U}, X, \parens{Y, Z}}$ is a yes-instance then $\angles{V, \mathcal C}$ is satisfiable.
Then we have two injections $f$ and $g$ such that
$f(Y) \cap \setdef{\tupvec x \in X}{(\tupvec x, i) \in g(Z)} = \emptyset$;
if $f((y_1, y_2)) = (x_1, x_2)$ then  $(y_1 \le_U x_1) \land (y_2 \le_U x_2)$ or $(y_2 \le_U x_1) \land (y_2 \le_U x_1)$; and 
if $g(z_j) = ((x_1,x_2), i)$ then $z_j \preceq_U x_i$.

We first show that $f(\tupvec y_i) = \tupvec x_i$ or $f(\tupvec y_i) = \tupvec x'_i$ for all $i \in [n]$.
Let $f(\tupvec y_i) = \tupvec x_j$. From the condition of $f$ and definition of $\preceq_U$,
we have $i \le j$ and $-i \le -j$ (or $i \le -j$ and $-i \le j$ but it is not possible since $i,j>0$), and these imply $i = j$.

Furthermore, we show that the clause $C_i$ contains the variable $v_j$ as a positive literal if $g(z_i) = (\tupvec x_j, k)$ and as a negative literal if $g(z_i) = (\tupvec x'_j, k)$.
Let $g(z_i) = (\tupvec x, k)$ and $\tupvec x = ((a_1, b_1, c_1), (a_2, b_2, c_2))$.
Then $i \le b_k$ and $-i \le c_k$ since $z_i \preceq_U (a_k, b_k, c_k)$.
If $b_k = -\infty$ and $c_k = -\infty$, then $z \not \preceq_U (a_k, b_k, c_k)$, which is a contradiction.
Thus we can assume that $c_k = -b_k$.
Since $i \le b_k$ and $-i \le -b_k$, hence $i = b_k$, and this means the clause $C_i$ contains the variable $x_j$ since $b_k$ is equal to either $p^j_1$, $p^j_2$ or $n^j$ for some $j$.
If $i = p^j_1$ or $i = p^j_2$ then $\tupvec x = \tupvec x_j$, and if $i = n^j$ then $\tupvec x = \tupvec x'_j$. Therefore, the claim follows.

Finally, we define an assignment $\varphi \colon V \to \{\fval, \tval\}$ as $\varphi(v_i) = \tval$ if $f(\tupvec y_i) = \tupvec x'_i$, otherwise $\varphi(v_i) = \fval$.
Since $f(Y) \cap \setdef{\tupvec x \in X}{(\tupvec x, i) \in g(Z)} = \emptyset$,
if $g(z_i) = (\tupvec x_j, k)$ or $g(z_i) = (\tupvec x'_j, k)$ then $\tupvec x_j \notin f(Y)$ or $\tupvec x'_j \notin f(Y)$, respectively.
This, combined with the arguments above, implies that each clause $C_i$ is satisfied by the variable~$v_j$ on the assignment~$\varphi$.
\qed
\end{proof}

From here, we provide a proof of the following theorem.
\begin{theorem} \label{thm:hardness:pathwidth}
\TMC{} is \NP-complete even if the pathwidths of $T$ and $P$ are $2$.
\end{theorem}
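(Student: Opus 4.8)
The plan is to reduce from \PPM{}, shown \NP-complete above, and in fact from the structured instances produced there, in which $U=(\Zset\cup\{-\infty\})^3$ is the product of three chains under the componentwise order. The high-level correspondence mirrors the diameter reduction: the host $T$ encodes the \emph{supply} of pairs $X$, while the pattern $P$ encodes the \emph{demand} coming from $Y$ and $Z$ together with filler that pins down the top of any embedding. The central technical device is a per-\emph{element} gadget: shifting every coordinate by a large constant to make it a nonnegative integer (and sending $-\infty$ to $0$), I would represent an element $a=(a_0,a_1,a_2)$ by a caterpillar $\mathrm{El}(a)$ that records the three numbers in three widely separated magnitude bands, so that $\mathrm{El}(a)$ is a minor of $\mathrm{El}(b)$ if and only if $a\preceq_U b$. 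Keeping each $\mathrm{El}(a)$ a caterpillar (pathwidth $1$) is exactly what lets the whole construction stay at pathwidth $2$.

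First I would build the gadgets. For each pair $x=(x_1,x_2)\in X$ I attach, to a \emph{bin root}, two slot caterpillars $\mathrm{El}(x_1)$ and $\mathrm{El}(x_2)$; the disjoint union of all bin roots, joined to a central spine vertex $t$ decorated with a large bundle of filler leaves, forms $T$. For $P$ I create, for each $y=(y_1,y_2)\in Y$, a \emph{pair gadget} (a root carrying $\mathrm{El}(y_1)$ and $\mathrm{El}(y_2)$); for each $z\in Z$ a \emph{singleton gadget} (a root carrying a single $\mathrm{El}(z)$); filler pair/singleton gadgets accounting for unused bins and slots; and a central vertex $p$ joined to all these roots and carrying a matching bundle of filler leaves. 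The magnitudes of the filler bundles at the three structural levels (central, root, slot) are chosen $\gg$ one another, just as the $n^2,n^3,3n^4$ weights are used in \Cref{thm:hardness:diam}, so that any minor embedding must send $p$ to $t$, must send each demand root to a distinct bin root, and must send each $\mathrm{El}(\cdot)$ into a distinct slot, with all leaves accounted for.

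Next I would verify the pathwidth bound. Each $\mathrm{El}(a)$ is a caterpillar, each bin/pair/singleton gadget is a single vertex with a bounded number of caterpillars hanging off it, and $T$ (resp.\ $P$) is a single spine vertex $t$ (resp.\ $p$) with all of these gadgets hanging off it. A tree consisting of a path whose vertices carry pendant caterpillars has pathwidth at most $2$ (slide along the spine keeping the current spine vertex in the bag, and process each pendant caterpillar with two further bag slots), so $\mathrm{pw}(T),\mathrm{pw}(P)\le 2$; since neither tree is a caterpillar, both equal $2$. For completeness, given the injections $f,g$ I read off the embedding: send $p\mapsto t$, each pair gadget of $y$ into the bin $f(y)$ (mapping $\{y_1,y_2\}$ to the two slots in whichever order realizes the dominance, which is exactly the disjunction in the definition of \PPM{}), each singleton gadget of $z$ into slot $i$ of the bin given by $g(z)=((x_1,x_2),i)$, and the fillers into the remaining bins and slots; the element-gadget minor relation guarantees each coordinate fits. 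For soundness, the tight counting forces the skeleton of any embedding (central-to-central, roots-to-bins, $\mathrm{El}$-to-slots); then $\mathrm{El}(a)$ being a minor of $\mathrm{El}(b)$ gives $a\preceq_U b$, the injectivity of the forced maps gives injectivity of $f$ and $g$, and the fact that a bin hosting a pair gadget has no leftover capacity for a singleton (again by counting inside the bin) gives the disjointness condition $f(Y)\cap\{\,x:(x,i)\in g(Z)\,\}=\emptyset$.

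The main obstacle is the double role of a bin: a single bin gadget must be able to host \emph{either} one $y$-pair occupying both slots (up to the slot swap) \emph{or} up to two independent $z$-singletons, one per slot, and the reduction must make these the only possibilities. Designing the separated leaf-count bands inside $\mathrm{El}(\cdot)$ together with the filler weights so that (i) coordinates cannot be mismatched, (ii) no slot absorbs two elements, and (iii) a pair-occupied bin cannot also leak a singleton into spare capacity --- that is, closing every way a minor embedding could cheat by contraction or by reusing slack --- is the delicate part of the argument.
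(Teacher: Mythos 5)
Your high-level architecture matches the paper's: a reduction from \PPM{} in which $T$ is a central vertex with one ``bin'' per $\tupvec x\in X$ carrying two element-encoding caterpillars, $P$ is a central vertex with pair gadgets for $Y$ and singleton gadgets for $Z$, and the pathwidth-$2$ bound follows because deleting the central vertex leaves a forest of caterpillars. However, the two components that constitute the actual proof are asserted rather than established. First, the element gadget $\mathrm{El}(a)$ --- a caterpillar with the property that $\mathrm{El}(a)$ is a minor of $\mathrm{El}(b)$ if and only if $a\preceq_U b$ --- is never constructed: you do not specify the band sizes, the spine structure, or the anchoring, and you do not prove the ``only if'' direction, which requires an analysis of all block-decompositions and both orientations of a caterpillar-in-caterpillar embedding. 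Second, and more seriously, the soundness argument is explicitly deferred: you state that ``tight counting forces the skeleton'' and then name the anti-cheating design as ``the delicate part of the argument'' without carrying it out. A concrete danger you would have to close: a singleton gadget $\mathrm{El}(z)$ attached to $p$ can have its preimage straddle a bin root $r_x$ (running up one slot, through $r_x$, into the other slot), at which point the root-level filler bundle at $r_x$ becomes available as capacity for a high-band station of $\mathrm{El}(z)$, destroying the coordinate encoding entirely. Ruling this out needs exact supply--demand padding and a counting argument, not just ``magnitudes chosen $\gg$ one another.'' Relatedly, your design faces a tension you flag but do not resolve: since $g$ may send two elements of $Z$ to the same $\tupvec x$, two singleton gadgets must be able to share one bin, so singleton roots cannot carry root-level bundles --- but then the weight hierarchy alone no longer pins down where singleton gadgets go.

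It is worth noting how the paper avoids these difficulties, because its gadget is both simpler and more robust. Instead of encoding the three coordinates of $(\Zset\cup\{-\infty\})^3$ as leaf counts in magnitude bands, the paper encodes each $a\in U$ by its \emph{down-set}: the \ordercat{} $\posetcat(a)$ has a spine of exactly $n+2$ vertices with a pendant leaf at position $i$ precisely when $u_i\le_U a$, so the gadget works for an arbitrary poset and the equivalence with $\le_U$ follows from reflexivity and transitivity alone. Because every gadget has the same spine length, all rigidity in the soundness proof comes from longest-path lengths (forcing the central vertices onto each other and forcing spine-position-preserving maps between gadgets) and from counting vertex-disjoint paths with $n+1$ vertices after padding $Z$ so that $2|X|=2|Y|+|Z|$; no weight hierarchy is needed, and the straddling cheat is excluded by the path count rather than by leaf accounting. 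Your plan could plausibly be completed, but as written it postpones exactly the constructions and forcing arguments that the theorem requires.
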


We show the \NP-completeness by presenting a reduction from \PPM.
Let $\angles{\angles{U, \le_U}, X, \parens{Y,Z}}$ be an instance of \PPM{}.
Let $U=\braces{u_0, u_1, \dots, u_{n-1}}$.
Without loss of generality, we can assume that $X$, $Y$, and $Z$ are not empty, and $U$ contains exactly all of the elements that appear in $X$, $Y$, and $Z$.
We can also assume $2|X| = 2|Y| + |Z|$ without loss of generality, 
because creating a new element $u$ of $U$ which is smaller than any element of $U$ and adding $u$ to $Z$ does not change the solution as long as $2|X| > 2|Y| + |Z|$.
First of all, we define the following notation,
to describe an element of the partial order $\angles{U, \le_U}$ into a caterpillar.
\begin{definition}
The \emph{\ordercat} of $a \in U$ is a graph $\posetcat(a)$ such that
\begin{itemize}
    \item the vertex set is the union of $V^a = \braces{v^a_0, v^a_1, \dots, v^a_{n-1}, v^{a}_n, v^a_{n+1}}$, $L^a =\setdef{l^a_i}{u_i \le_U a}$, and
    \item the edge set is $\bigcup_{0\le i \le n} \braces{v^a_i, v^a_{i+1}} \cup \bigcup_{l^a_i \in L^a}\braces{v^a_i, l^a_i}$.
\end{itemize}
\end{definition}
An example of \ordercat{} is shown in \cref{fig:pathwidth:ordercat}.
Note that $v^a_n$ and $v^a_{n+1}$ do not correspond to any elements in $U$,
and guarantees that the maximum path length from $v_0$ in any \ordercat{} is exactly $n$.
Since the degree of every vertex $l^a_i$ is 1,
and since an edge set $\bigcup_{0\le i \le n-2} \braces{v^a_i, v^a_{i+1}}$ forms a path graph, every \ordercat{} is a caterpillar.
Note that the number of vertices is at most $2n+2$ for every order caterpillar. 
\begin{figure}[tb]
    \centering
    \includestandalone[width=\linewidth]{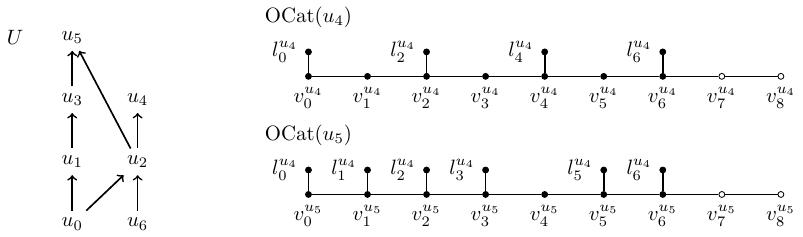}
    \caption{
        Examples of a partial order set $U=\braces{u_0, \dots, u_6}$ and the \ordercat s of $u_4$ and $u_5$.
        Partial order $\le_U$ denoted by the Hasse diagram of $\angles{U,\le_U}$, i.e.,
        an arrow from $a$ to $b$ indicates $a \le_U b$ and there is no $c$ such that $a <_U c <_U b$.
        In pictures of \ordercat s, a white node denotes a node such that there is no corresponding vertex in $U$.
    }
    \label{fig:pathwidth:ordercat}
\end{figure}
\begin{observation}
Every \ordercat{} is a caterpillar, and hence its pathwidth is 1.
\end{observation}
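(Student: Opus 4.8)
The plan is to check directly from the definition that $\posetcat(a)$ satisfies the definition of a caterpillar, and then to certify the pathwidth by exhibiting a width-$1$ path decomposition (or, alternatively, by invoking the standard fact that caterpillars have pathwidth $1$).

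First I would confirm that $\posetcat(a)$ is a tree. The spine $v^a_0 v^a_1 \cdots v^a_{n+1}$ is a spanning path on $V^a$, and each $l^a_i \in L^a$ is joined to the graph by the single edge $\{v^a_i, l^a_i\}$, so $\posetcat(a)$ is connected; it has $(n+1)+|L^a|$ edges on $(n+2)+|L^a|$ vertices, i.e.\ one more vertex than edges, hence it is acyclic and therefore a tree.

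Next I would identify the leaves. By construction every $l^a_i$ has degree $1$; the interior spine vertices $v^a_1,\dots,v^a_n$ each have degree at least $2$ (being adjacent to both spine neighbors); the vertex $v^a_{n+1}$ is always a leaf; and $v^a_0$ is a leaf exactly when $u_0 \not\le_U a$. Deleting all leaves simultaneously therefore removes $L^a$, the vertex $v^a_{n+1}$, and possibly $v^a_0$, and what remains is the contiguous subpath of the spine induced on $\{v^a_1,\dots,v^a_n\}$ (together with $v^a_0$ if it survives). A subpath of a path is again a path, so removing all leaves of $\posetcat(a)$ yields a path, which is precisely the definition of a caterpillar.

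Finally, for the pathwidth I would give the explicit decomposition whose bags are the spine-edge bags $\{v^a_i,v^a_{i+1}\}$ listed in increasing order of $i$, with each leaf bag $\{v^a_i,l^a_i\}$ inserted immediately before the spine-edge bag $\{v^a_i,v^a_{i+1}\}$ whenever $l^a_i \in L^a$. Every bag has size $2$, every vertex and every edge lies in some bag, and the bags containing any fixed vertex form a contiguous block, so the width is $1$; since $\posetcat(a)$ contains an edge, its pathwidth equals $1$. The statement has essentially no hard core, being immediate from the definition; the only points that need a little care are checking that no interior spine vertex ever becomes a leaf (so that the leaf-deletion returns a single path rather than a disconnected set of spine vertices) and verifying condition (III) of the path decomposition, which is exactly why the ordering of the bags above is fixed as stated.
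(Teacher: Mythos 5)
Your proof is correct and follows essentially the same route as the paper, which justifies the observation inline by noting that every $l^a_i$ has degree $1$ while the $v^a_i$ form a path, so the graph is a caterpillar and hence has pathwidth $1$. You are somewhat more thorough — you carefully identify which spine endpoints are also leaves and exhibit an explicit width-$1$ path decomposition rather than invoking the standard fact that caterpillars have pathwidth $1$ — but these are refinements of the same definitional argument, not a different approach.
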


\begin{definition}
    Let $a,b\in U$. 
    When $l^b_i\in L^b$ if $l^a_i \in L^a$ for all $i \in \{0,\cdots, n-1\}$, 
    we can define the mapping $f\colon \posetcat(b) \to \posetcat(a)$ such that
    $f(v^b_i) = v^a_i$ for all $i \in \{0, \dots, n+1\}$ and
    $f(l^b_i) = l^a_i$ if $u_i \le_U a$, $f(l^b_i) = v^a_i$ if $u_i \not \le_U a$, and
    we call this mapping $f$ the \emph{natural embedding} from $\posetcat(b)$ to $\posetcat(a)$.
    If there exists $i$ such that $l^b_i \not \in L^b$ and $l^a_i \in L^a$, we say that the natural embedding from $\posetcat(b)$ to $\posetcat(a)$ does not exists.
\end{definition}
Clearly, for $a, b \in U$, the natural embedding from $\posetcat(b)$ to $\posetcat(a)$ is an embedding from $\posetcat(b)$ to $\posetcat(a)$ if it exists.
By the transitivity and reflexivity of $\le_U$ relation, we have the following.

\begin{observationrep}\label{lem:pathwidth:ordercat-minor}
Let $a,b\in U$. 
There exists the natural embedding from $\posetcat(b)$ to $\posetcat(a)$ if and only if $a\le_U b$. 
\end{observationrep}
\begin{proof}
Let $L(x) = \setdef{l\in U}{l \le_U x}$ for all $x\in U$. Note that an element in $L(x)$ corresponds to a vertices $L^a$ in the \ordercat{} of $a$.

Assume that $a \le_U b$. By the transitivity of $\le_U$, that is, if $x \le_U y$ and $y \le_U z$ then $x \le_U z$ for all $x,y,z \in U$, we have $L(a) \subseteq L(b)$. Hence, the natural embedding from $\posetcat(a)$ to $\posetcat(b)$ can be defined. 

Assume that there exists a natural embedding $f\colon \posetcat(b) \to \posetcat(a)$ and let $a = u_p$.
By the reflexivity of $\le_U$, that is, $u \le_U u$ for all $u\in U$, we have $u_p \le_U u_p$.
Thus $l^a_p \in L^a$, which implies $l^b_p \in L^b$ since $f^{-1}(l^a_p) = \{l^b_p\}$ by the definition of a natural embedding.
Therefore, we have $b \ge_U u_p = a$.
\qed
\end{proof}

\proofsubparagraph{Construction of \TMC{} instance $\angles{T,P}$.}
See \cref{fig:pathwidth:tp} for the whole image of \TMC{} instance $\angles{T,P}$.
For a pair $\tupvec x= (a,b)$, we write $\tupvec x_1$ for the first element $a$ and $\tupvec x_2$ for the second element $b$. 

We first define a family of trees $(T_{x})_{\tupvec x \in X}$ to describe $T$.
Let $\tupvec x\in X$. Each tree $T_{x}$ consists of three part, two subtrees $T^L_{x}$ and $T^R_{x}$, and a root vertex $r_{x}$.
A subtree $T^L_{x}$ is just $\posetcat(x_1)$, and 
$T^R_{x}$ is just $\posetcat(x_2)$.
Connect $r_x$ to $v^{x_1}_0$ in $T^L_x$ and $v^{x_2}_0$ in $T^R_x$. 
Then we obtain a family of trees $(T_x)_{x \in X}$.
Note that each $T_x$ is a caterpillar such that its backbone has $2n+5$ vertices and the number of vertices is at most $4n+5$.
Add a new vertex $r_T$ and connect $r_T$ to all $r_x$ in $T_x$, then we obtain a tree $T$.
Note that the number of vertices of $T$ is at most $(4n +5)\cdot |X|  + 1$.
Since each connected component of $T-\{r_T\}$ is a caterpillar, its pathwidth is 1, and the pathwidth of $T$ is at most 2.

We next explain how to construct $P$. 
First, we construct a family of trees $(Q_y)_{y\in Y}$ by an analogous way to $(T_x)_{\tupvec x \in X}$.
That is, for $\tupvec y \in Y$, tree $Q_y$ is a tree has a root vertex $r_y$ and two substrees $Q^L_y$ and $Q^R_y$ such that
$Q^L_y$ is just $\posetcat(y_1)$, $Q^R_y$ just $\posetcat(y_2)$, and $r_y$ is connected to $v^{y_1}_0$ in $Q^L_y$ and $v^{y_2}_0$ in $Q^R_y$.
Next, we define a family of trees $(R_z)_{z \in Z}$, where each tree $R_z$ is just $\posetcat(z)$.
Finally, we add a new vertex $r_P$ and connect $r_P$ to each $r_y$ in $Q_y$ and each $v_0^z$ in $R_z$, and then we obtain $P$.
Note that the number of vertices of $P$ is at most $(4n+5) \cdot |Y| + (2n+2) \cdot |Z|  + 1$.
Since each connected component of $P-\{r_P\}$ is a caterpillar, its pathwidth is 1, and the pathwidth of $P$ is at most 2.

\begin{figure}[tb]
    \centering
    \includestandalone[width=\linewidth]{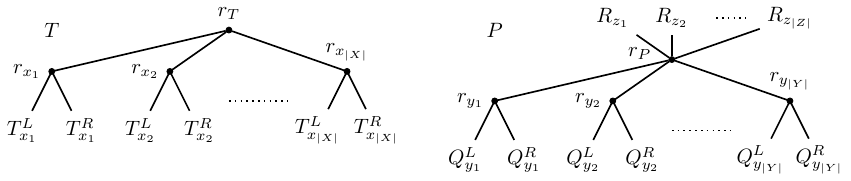}
    \caption{
        An example of the reduction in the proof of~\cref{thm:hardness:pathwidth}. 
    }
    \label{fig:pathwidth:tp}
\end{figure}

\begin{lemmarep}\label{lem:pathwidth:completeness}
If $\angles{\angles{U, \leq_U}, X, \parens{Y,Z}}$ is a yes-instance then $P$ is a minor of $T$.
\end{lemmarep}
\begin{proof}
Assume that $\angles{\angles{U, \leq_U}, X, \parens{Y,Z}}$ is a yes-instance, then there exists two injections $f\colon Y \to X$ and $g \colon Z \to X \times \{1,2\}$ such that:
$f(Y) \cap \setdef{\tupvec x \in X}{(\tupvec x, i) \in g(Z)} = \emptyset$;
if $f(\tupvec y) = \tupvec x$ then  $(y_1 \le_U x_1) \land (y_2 \le_U x_2)$ or $(y_2 \le_U x_1) \land (y_1 \le_U x_2)$; and 
if $g(z) = (\tupvec x, i)$ then $z \le x_i$. 
Since $f$ and $g$ are injective, the inverses $f^{-1}\colon f(Y) \to Y$ and $g^{-1}\colon g(Z) \to Z$ are determined uniquely and surjective.

Now, we define a mapping $\varphi\colon T \to P$ in the following way.
Let $\tupvec x \in f(Y)$ and $\tupvec y = f^{-1}(x)$.
Assume that $y_1 \le_U x_1$ and $y_2 \le_U x_2$. 
From \cref{lem:pathwidth:ordercat-minor}, 
there are two natural embeddings $\alpha\colon T_x^L \to Q_y^L$ and $\beta\colon T_x^R \to Q_y^R$.
We define the $\varphi$ for the vertices in $T_x$ as the following:
$\varphi(r_x) = r_y \in Q_y$;
$\varphi(v) = \alpha(v)$ for $v \in V(T_x^L)$;
$\varphi(v) = \beta(v)$ for $v \in V(T_x^R)$.
In the case of $y_2 \le_U x_1$ and $y_1 \le_U x_2$, we define $\varphi$ in a similar way by swapping $T_x^L$ and $T_x^R$.

Let $(\tupvec x, i) \in g(Z)$ and $z \in g^{-1}(\tupvec x,i)$.
We set $\varphi(r_x) = r_P$.
Let us consider the case $i=1$. 
Since $x_1 \le_U z$, and from \cref{lem:pathwidth:ordercat-minor}, there is a natural embedding $\gamma$ from $T^L_x$ to $R_z$. 
We set $\varphi(v) = \gamma(v)$ for all $v \in T^L_x$ and $\varphi(r_x) = r_P$.
Since $\gamma$ is an embedding, if we restrict $\varphi$ to $T_x^L$, it is clear that the restriction is an embedding from $T^L_x$ to $R_z$.
In the case of $i=2$, we define $\varphi$ in a similar way by replacing $T_x^L$ to $T_x^R$.

Finally, we set $\varphi(v) = r_P$ for all of the undefined vertex $v$ in $T$, then conclude the definition of $\varphi$.
Note that $\varphi(r_T) = r_P$. 

\begin{claim}\label{clm:pathwidth:correctemb}
    The mapping $\varphi$ is an embedding from $T$ to $P$.
\end{claim}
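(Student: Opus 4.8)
The plan is to verify that the mapping $\varphi$ defined in the proof satisfies the two defining conditions of a minor embedding: (connectivity) for every $w \in V(P)$, the preimage $\varphi^{-1}(w)$ induces a connected subgraph of $T$, and (edge-surjectivity) for every edge $\{w_1,w_2\} \in E(P)$, there is an edge of $T$ whose endpoints map to $w_1$ and $w_2$. I would also need to confirm that $\varphi$ is surjective onto $V(P)$. The key structural fact to exploit is that $\varphi$ was built by gluing together \emph{natural embeddings} on the caterpillar blocks, and by \cref{lem:pathwidth:ordercat-minor} each such natural embedding is itself a genuine embedding of one \ordercat{} into another. So the local correctness inside each block $T_x$ is inherited for free; the real content is checking that the blocks fit together consistently at the root.

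First I would check surjectivity and the handling of the ``leftover'' vertices. Every $T_x$ with $\tupvec x \in f(Y)$ is mapped (via $\alpha,\beta$ or their swapped versions) onto the corresponding $Q_y$, covering $r_y$, $Q_y^L$, and $Q_y^R$; every $T_x$ with $(\tupvec x,i) \in g(Z)$ has one of its two arms mapped onto $R_z$ and its root sent to $r_P$. Because $f$ and $g$ are injective with disjoint images (the first bullet of \PPM{}), distinct source blocks hit distinct target blocks, so each $Q_y$ and each $R_z$ is covered exactly once; hence every vertex of $P$ lying in some $Q_y$ or $R_z$ is in the image. The remaining target vertex $r_P$ is covered since $\varphi(r_T)=r_P$. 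Conversely I must confirm that the surjectivity of $f^{-1}$ and $g^{-1}$ (noted in the proof) guarantees \emph{every} $Q_y$ and $R_z$ is actually the target of some block, so nothing in $P$ is missed.

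Next I would verify connectivity and edge-surjectivity. For a target vertex $w$ living strictly inside one block $Q_y$ or $R_z$, its preimage sits inside a single source block and connectivity/edge-images follow from \cref{lem:pathwidth:ordercat-minor} applied to the relevant natural embedding. The only subtle vertex is $r_P$: its preimage $\varphi^{-1}(r_P)$ consists of $r_T$, all the roots $r_x$ that are not routed into some $Q_y$, and, crucially, for each $z$-assigned block $T_x$ the root $r_x$ \emph{together with} the unused arm (the arm $T_x^R$ or $T_x^L$ not sent into $R_z$, plus all undefined vertices), all flattened onto $r_P$. Connectivity of $\varphi^{-1}(r_P)$ holds because every such vertex lies in a component of $T$ attached to $r_T$ through its root $r_x$, and $r_T \in \varphi^{-1}(r_P)$ acts as a hub. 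For edge-surjectivity at edges incident to $r_P$ in $P$ (namely $\{r_P, r_y\}$ and $\{r_P, v^z_0\}$), I would point to the edge $\{r_T, r_x\}$ in $T$ for the $Q_y$ case and to the edge $\{r_x, v^{x_i}_0\}$ (mapped to $\{r_P, v^z_0\}$ via the natural embedding $\gamma$ sending $v^{x_i}_0 \mapsto v^z_0$) for the $R_z$ case.

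\textbf{The main obstacle} I anticipate is the bookkeeping for the $z$-assigned blocks, where only \emph{one} arm of $T_x$ is used productively and the other arm—along with $r_x$—must be absorbed cleanly into $\varphi^{-1}(r_P)$ without breaking connectivity or accidentally creating a spurious requirement. In particular I must check that collapsing the unused arm onto $r_P$ does not violate edge-surjectivity (it never forces a new edge of $P$) and that the natural embedding $\gamma$ correctly sends the arm's base vertex to the base of $R_z$ so that the $r_P$-incident edge of $P$ is realized. The other delicate point is simply making sure the disjointness condition $f(Y) \cap \setdef{\tupvec x \in X}{(\tupvec x,i)\in g(Z)} = \emptyset$ is what prevents a single block $T_x$ from being asked to serve both a $Q_y$ and an $R_z$ simultaneously; without it $\varphi$ would be ill-defined on that block. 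Everything else reduces to invoking \cref{lem:pathwidth:ordercat-minor} blockwise.
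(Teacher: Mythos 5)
Your proposal is correct and takes essentially the same route as the paper: defer all interior vertices and edges to the natural embeddings via \cref{lem:pathwidth:ordercat-minor}, verify connectivity of $\varphi^{-1}(r_P)$ by observing that each unused arm is absorbed into $\varphi^{-1}(r_P)$ together with its root $r_x$ (with $r_T$ acting as the hub), and realize the $r_P$-incident edges of $P$ by the edges $\{r_T, r_x\}$ and $\{r_x, v^{x_i}_0\}$ of $T$. One small imprecision: the edges $\{r_y, v^{y_1}_0\}$ and $\{r_y, v^{y_2}_0\}$ are not supplied by the natural embeddings themselves (their images do not contain $r_y$), so they need the same one-line explicit treatment you give the $r_P$-incident edges — namely the edges $\{r_x, v^{x_1}_0\}$ and $\{r_x, v^{x_2}_0\}$ of $T$ — which is exactly how the paper handles them as two of its four explicit edge cases.
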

\begin{claimproof}
Since all of $\alpha$, $\beta$, and $\gamma$ in the definition of $\varphi$ are embeddings,
it is sufficient to consider only those vertices that are not related to them,
that is, $r_P \in V(P)$, vertices in $\varphi^{-1}(r_P) \subseteq V(T)$, $p^x$ in $T_x$, and $p^y$ in $Q_y$.

First, we show the connectivity of $\varphi^{-1}(v)$. 
It is clear that the subgraph of $T$ induced by $\varphi^{-1}(v)$ is connected for all $v \neq r_P$,
since if $v = r_y$ for some $\tupvec y\in Y$ then $\varphi^{-1}(v)$ is just a singleton, otherwise $\varphi^{-1}(v)$ is defined by $\alpha$, $\beta$, or $\gamma$.
Hence, we show that the subtree of $T$ induced by $\varphi^{-1}(r_P)$ is connected.
Let $v \in \varphi^{-1}(r_P)$.
By the definition of $\varphi$, we have $\varphi^{-1}(r_P) \subseteq V(T)\setminus \setdef{v \in V(Q_y)}{\tupvec y \in f(Y)}$.
For $v = r_x \in \varphi^{-1}(r_P)$, $v$ is connected to $r_T \in \varphi^{-1}(r_P)$.
The remaining cases, $v$ in $V(T^L_x)$ or $V(T^R_x) \cup \{p_x^R\}$, imply $V(T^L_x) \subseteq \varphi^{-1}(r_P)$ or $V(T^R_x) \cup \{p_x^R\} \subseteq \varphi^{-1}(r_P)$.
Therefore, the connectivity of $\varphi^{-1}(v)$ follows.

Finally, we show that there exists an edge $e'=\braces{u',v'}$ of $T$ for all $e = \braces{u,v} \in E(P)$ such that $\varphi(u') = u$ and $\varphi(v') = v$.
Only the following cases need to be considered: 
\begin{itemize}
    \item $e= \braces{r_y, v^{y_1}_0}$ for $\tupvec y\in Y$, let $\tupvec x = f(\tupvec y)$ then $\varphi(r_x) = r_y$, $\varphi(v^{x_1}_0) = v^{y_1}_0$ and $\{r_x, v^{x_1}_0\} \in E(T)$;
    \item $e=\braces{r_y, v^{y_2}_0}$ for $\tupvec y \in Y$, this case can be shown in a similar way to the first case;
    \item $e=\braces{r_P, r_y}$ for $\tupvec y \in Y$, let $\tupvec x = f(\tupvec y)$, then $\varphi(r_T) = r_P$, $\varphi(r_x) = r_y$ and $\braces{r_T, r_x}\in E(T)$;
    \item  $e=\braces{r_P, v^z_0}$ for $z \in Z$, let $(\tupvec x, i) = g(z)$, then $\varphi(r_x) = r_P$, $\varphi(v^{x_i}_0) = v^z_0$, and $\braces{r_x, v^{x_i}_0} \in E(T)$.
\end{itemize}
Hence, the claim follows.
\end{claimproof}
\cref{clm:pathwidth:correctemb} implies the lemma.
\qed
\end{proof}

\begin{lemmarep}\label{lem:pathwidth:soundness}
If $P$ is a minor of $T$ then $\angles{\angles{U, \leq_U}, X, \parens{Y,Z}}$ is a yes-instance.
\end{lemmarep}
\begin{proof}
Assume that $P$ is a minor of $T$, then there exists an embedding $\varphi \colon T \to P$.

We first show that $\varphi(r_T) = r_P$. Suppose that $\varphi(r_T) \neq r_P$.
Since $\varphi^{-1}(r_P)$ is connected and each connected component of $T-r_T$ is one of tree $T_x$, there exists a subtree $T_x$ such that $\varphi^{-1}(r_P) \subseteq V(T_x)$.
Fix this $T_x$ and $\tupvec x$.
Without loss of generality, we can assume that $|Y|, |X| \ge 2$.
Then, there are two edge disjoint paths with $n+3$ vertices from $r_P$,
such as $\{r_P, r_y, v^{y_1}_0, \dots, v^{y_1}_{n+1}\}$.
Let us consider the case $r_x \in \varphi^{-1}(r_P)$.
Since there is no path with $n+2$ vertices from $r_x$ which does not contain $r_T$,
there is at most one path with $n+2$ vertices from one of $\varphi^{-1}(r_P)$.
Hence $r_x \not\in \varphi^{-1}(r_P)$, which implies $\varphi^{-1}(r_P) \subseteq T^L$ or $\varphi^{-1}(r_P) \subseteq T^R$.
In neither case, there is no path with $n+2$ vertices from one of $\varphi^{-1}(r_P)$ which does not contain $r_x$.
This contradicts $\varphi(r_T) \neq r_P$.

Hence $r_T \in \varphi^{-1}(r_P)$,
which implies that each connected component of $P-r_P$ corresponds to a connected component of $T-r_T$ by $\varphi^{-1}$.
Here, each longest path of a connected component $T_x$ of $T-r_T$ is either
$\{v^{x_1}_{n+1}, \cdots, v^{x_1}_{0}, r_x, v^{x_2}_{0}, \cdots, v^{x_2}_{n+1}\}$ or the reverse.
The same is true for a connected component $Q_y$ of $P - r_P$.
Thus, every longest path of $T_x$ and $Q_y$ have $2(n+2)+1$ vertices and the central vertex are $r_x$ and $r_y$, respectively.
Hence, we have $\varphi(r_x) = r_y$.
Furthermore, $\varphi(v^{x_1}_i) = v^{y_1}_i$ and $\varphi(v^{x_2}_i) = v^{y_2}_i$, or 
$\varphi(v^{x_1}_i) = v^{y_2}_i$ and $\varphi(v^{x_2}_i) = v^{y_1}_i$.
Define $f(\tupvec y) = \tupvec x$ by the above correspondence for all $y\in Y$.
Then, from \cref{lem:pathwidth:ordercat-minor},
$\tupvec y\in Y$ and $\tupvec x = f(\tupvec y)$ are satisfied the condition $(y_1 \le_U x_1) \land (y_2 \le_U x_2)$ or $(y_2 \le_U x_1) \land (y_1 \le_U x_2)$.

Let $P'$ be the subgraph of $P$ induced by $V(P) \setminus \bigcup_{\tupvec y\in Y}V(Q_y) = \{r_P\} \cup \bigcup_{z\in Z}V(R_z)$, and
$T'$ be the subgraph of $T$ induced by $V(T) \setminus \bigcup_{\tupvec y\in Y}\varphi^{-1}(V(Q_y))$.
Since $\varphi$ is an embedding from $T$ to $P$, the restriction of $\varphi$ to $T'$ is an embedding to $P'$.
Recall that we can assume that $2|X| = 2|Y| + |Z|$ without loss of generality.
Here, there are $|Z| = 2|X| - 2|Y|$ vertex disjoint paths with $n+1$ vertices in $P'$ which starts from an adjacent vertex of $r_P$.  
Suppose that there is $r_x$ in $T'$ such that $r_x \not \in \varphi^{-1}(r_P)$.
Then, there exists $z \in Z$ such that $\varphi(V(T_x)) \subseteq V(R_z)$, from the definition of an embedding.
Now, the graph $T'-T_x$ has at most $2(|X| - |Y| -1)$ vertex disjoint paths with $n+1$ vertices.
However, the graph $P'-R_z$ has exact $2|Z| - 1$ vertex disjoint paths with $n+1$ vertices,
so $P'-R_z$ is not a minor of $T'-T_x$.
Therefore, for each $r_x$ in $T'$, $\varphi(r_x) = r_P$.
Now, there are exact $2|X| -2|Y|$ vertex disjoint paths with $n+1$ vertices that do not contain $r_T$ or any of $r_x$'s.
This implies that, for all $S = T^R_x$ or $S = T^L_x$ in $T'$, the restriction of $\varphi$ to $S$ is a natural embedding to some $R_z$.
We set $g(z) = (\tupvec x, 1)$ if the case $S = T^L_x$, and set $g(z) = (\tupvec x, 2)$ if the case $S = T^R_x$.
Then, from \cref{lem:pathwidth:ordercat-minor}, if $g(z) = (\tupvec x, i)$ then $z \le x_i$. 

From the construction of $f$ and $g$, it is clear that $f(Y) \cap \setdef{\tupvec x \in X}{(\tupvec x, i) \in g(Z)} = \emptyset$.
\qed
\end{proof}
This completes the proof of \Cref{thm:hardness:pathwidth}.

\section{Polynomial-Time Algorithms with Small Path Eccentricity and Its Application for the Other Positive Results}\label{sec:positive-results}

We give two polynomial-time algorithms for \TMC{} with a small path eccentricity.
The former algorithm determines whether a tree $T$ contains a caterpillar $P$.
The latter algorithm determines whether a lobster $T$ contains a lobster $P$.
In \Cref{subsec:cr}, we give polynomial-time algorithms for all cases in \Cref{tab:summary} using the above two algorithms.
Throughout this section, we assume $|V(P)|\geq 2$; otherwise, the problem is trivial.

\subsection{Tree-Caterpillar Containment}
We begin by considering the case where $P$ is a caterpillar. The algorithm is given in Algorithm~\ref{alg:cattree}.
It is easy to verify that Algorithm~\ref{alg:cattree} works in polynomial time.
Briefly, \Cref{alg:cattree} first guesses a backbone $C$ of $T$ ($u$-$v$ path) that corresponds to the backbone of $P$,
and then finds out how to contract $C$ to form the backbone of $P$ by a greedy method.
Moreover, the algorithm is based on the fact that contracting an internal vertices in $V(T)\setminus C$ to a vertex in $C$ does not affect whether there is a minor embedding $f\colon T \to P$ such that $f(C)$ maps to the backbone of $P$ since $P$ is a caterpillar.
Thus, it can be computed whether the backbone of $P$ can be embedded into $C$ by focusing only on the number of leaves.

\begin{algorithm}[t!]
\caption{A polynomial-time algorithm for tree-caterpillar containment.}\label{alg:cattree}
\Procedure{\emph{\Call{CatInTree}{$T,P$}}}{
    Let $B=(b_1,\dots, b_s)$ be a backbone of $P$\;
    \For{$i=1,\dots, s$}{
        Let $P_i$ be the connected component of $P - E[B]$ containing $b_i$\;
        Let $p_i$ be the number of leaves in $P_i$ other than $b_i$\;
    }
    \For{$u,v\in V(T)$}{\label{line:cattree}
        Let $C=(u=c_1,\dots,c_t=v)$ be the $u$-$v$ path in $T$\;
        \For{$i=1,\dots, t$}{
            Let $T_i$ be the connected component of $T - E[C]$ containing $c_i$\;
            Let $l_i$ be the number of leaves in $T_i$ other than $c_i$\;
        }
        $x\leftarrow 0$, $\mathrm{flag}\leftarrow \mathrm{true}$\;
        \For{$i=1, \dots, s$}{\label{line:cat_loop}
            Let $j$ be the smallest index such that $p_i\leq \sum_{k=x+1}^{j}l_k$\;
            \If{There is no such $j$}{
                $\mathrm{flag} \leftarrow \mathrm{false}$\;
                \textbf{break}\;
            }
            $x\leftarrow j$\;
        }
        \lIf{$\mathrm{flag} = \mathrm{true}$}{
            \Return yes
        }
    }
    \Return no\;
}
\end{algorithm}

\begin{theorem}\label{lem:cattree}
If $P$ is a caterpillar, Algorithm~\ref{alg:cattree} returns yes if and only if $P$ is a minor of $T$.
\end{theorem}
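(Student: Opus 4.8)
The plan is to establish both directions of the equivalence by passing back and forth between the consecutive partition of a path that a successful run of Algorithm~\ref{alg:cattree} produces and a genuine minor embedding $f\colon T\to P$.

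For the soundness direction (if the algorithm answers yes, then $P$ is a minor of $T$), I would fix the pair $u,v$ and the resulting path $C=(c_1,\dots,c_t)$ for which the inner greedy loop succeeds, together with the consecutive nonempty blocks $I_1,\dots,I_s\subseteq\{1,\dots,t\}$ it assigns to $b_1,\dots,b_s$, where the block of $b_i$ satisfies $\sum_{k\in I_i} l_k\ge p_i$. From this I would build branch sets directly: map the segment $\{c_k : k\in I_i\}$ together with all non-leaf vertices of the hanging trees $T_k$ ($k\in I_i$) and all their unused leaves to $b_i$, and assign to each of the $p_i$ leaves of $P_i$ one distinct unused leaf of some $T_k$ with $k\in I_i$ (possible since $\sum_{k\in I_i} l_k\ge p_i$) as a singleton branch set; the leftover tail $c_{x_s+1},\dots,c_t$ and its hanging trees are absorbed into the branch set of $b_s$. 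Verifying that this is a minor embedding is routine: each backbone branch set is connected because it is a subpath of $C$ with hanging subtrees attached at their roots $c_k$ (deleting chosen leaves keeps those subtrees connected), adjacent backbone blocks meet along the path edge $\{c_{x_i},c_{x_i+1}\}$, and each chosen leaf is adjacent to its parent, which lies in the corresponding backbone branch set.

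For the completeness direction, I would start from an arbitrary embedding $f$ and first prove the structural fact that contracting each branch set $f^{-1}(b)$ yields exactly $P$: contracting connected subsets of a tree gives a tree on $|V(P)|$ vertices, its edge set contains $E(P)$ since every edge of $P$ is realized, and an edge count forces equality. A consequence is that for each leaf $w$ of $P$ hanging off $b_i$, the set $f^{-1}(w)$ is attached to $f^{-1}(b_i)$ by a single edge and hence contains at least one leaf of $T$. I would then choose $u\in f^{-1}(b_1)$ and $v\in f^{-1}(b_s)$; since $\bigcup_i f^{-1}(b_i)$ is connected with quotient the path $b_1\cdots b_s$, the $u$--$v$ path $C$ stays inside it and meets each $f^{-1}(b_i)$ in a consecutive nonempty block $\sigma_i$, giving a consecutive partition of $C$. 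The heart of the argument is the localization claim that every leaf of $T$ lying in some $f^{-1}(w)$ with $w$ hanging off $b_i$ is attached to $C$ at a vertex of $\sigma_i$, so it is counted in $\sum_{k\in\sigma_i} l_k$; this yields $\sum_{k\in\sigma_i} l_k\ge p_i$ for every $i$. Finally, a standard exchange argument shows that the greedy pointer never exceeds the pointer of this honest partition, so the greedy loop succeeds for the pair $(u,v)$ and the algorithm returns yes.

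I expect the localization claim to be the main obstacle. To prove it I would use that $f^{-1}(w)$ connects to the rest of $T$ only through the unique edge into $f^{-1}(b_i)$, so the path from any leaf of $f^{-1}(w)$ toward $C$ first enters $f^{-1}(b_i)$ and then, because $f^{-1}(b_i)$ is a connected subtree containing $\sigma_i$, reaches $C$ for the first time within $\sigma_i=C\cap f^{-1}(b_i)$; hence the component of $T-E[C]$ containing that leaf is rooted at a vertex of $\sigma_i$. Some care is needed with degenerate cases ($p_i=0$, empty hanging trees, and the endpoints $u,v$ possibly being leaves of $T$), but these affect only the bookkeeping and not the core argument.
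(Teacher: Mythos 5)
Your proposal is correct and takes essentially the same route as the paper's proof: in both, soundness builds the embedding from the greedy blocks by sending selected leaves of the hanging trees to the leaves of each $P_i$ and everything else in a block to $b_i$, and completeness extracts an honest consecutive partition of a $u$--$v$ path from a given embedding, shows $\sum_{k}l_k\ge p_i$ on each block, and closes with the same greedy-pointer induction. The additional details you spell out (the edge-count fact that the quotient of $T$ by the branch sets is exactly $P$, and the localization claim) are precisely what the paper leaves implicit when it asserts that $f$ naturally induces an embedding of $P_i$ into the contracted block $T_{c_{z_i},c_{z_{i+1}-1}}$.
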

\begin{proof}

Assume that $P$ is embedded into $T$ by a mapping $f$. Let $e_i$ be the edge connecting $f^{-1}(b_i)$ and $f^{-1}(b_{i+1})$ for $i=1,\dots,s-1$. Then, there exists a path in $T$ in which $e_1,\dots,e_{s-1}$ appear in this order. Take a minimal such path $C=(c_1,\dots,c_t)$. Consider the case we have $u=c_1$ and $v=c_t$ in the loop starting from line~\ref{line:cattree}.

Using the integers $1=z_1,\dots,z_{s+1}=t+1$, for each $i=1,\dots,s$, we define $f^{-1}(b_i)\cap C=\{c_{z_i},\dots,c_{z_{i+1}-1}\}$. Let $T_{j',j}=\left(\bigcup_{k=j'}^{j}T_k\right)/\{c_{j'},\dots,c_j\}$ for $1\leq j'\leq j\leq s$. Then, the mapping naturally induced by $f$ embeds $P_i$ into $T_{c_{z_i},c_{z_{i+1}-1}}$. In particular, the number of leaves in $T_{c_{z_i},c_{z_{i+1}-1}}$, denoted as $\sum_{k=c_{z_i}}^{c_{z_{i+1}-1}}l_k$, is greater than or equal to $p_i$. Therefore, for $j\leq c_{z_i}$ and $c_{z_{i+1}-1}\leq j'$, it holds that $p_i\leq \sum_{k=j}^{j'}l_k$. In particular, considering the $i$-th iteration of the loop starting from line~\ref{line:cat_loop} and denoting the value of $x$ at the end of that iteration as $x_i$, it is clear by induction that $x_i\leq c_{z_{i+1}-1}$ always holds. Hence, the algorithm returns ``yes''.

Conversely, assuming that the algorithm returns ``yes'', we consider the corresponding $C=(c_1,\dots,c_t)$. We define $x_i$ as above for each $i=1,\dots,s$. Then, it holds that $p_i\leq \sum_{k=c_{z_i}}^{c_{z_{i+1}-1}}l_k$. For each $i=1,\dots,s$, let $v_{i,1},\dots,v_{i,p_i}$ be the vertices in $P_i$ other than $b_i$, and let $v'_{i,1},\dots,v'_{i,p_i}$ be $p_i$ selected leaves (not on $C$) in $T_{c_{z_i}},\dots,T_{c_{z_{i+1}}}$, respectively. Define $f(w)$ as follows:
\begin{align*}
f(w)=\begin{cases}
    v_{i,j} & \parens{w=v'_{i,j}}\\
    b_i & \parens*{w\in \bigcup_{k=c_{z_i}}^{c_{z_{i+1}-1}} V(T_k)\setminus \{v'_{i,1},\dots, v'_{i,p_i}\}}
\end{cases}.
\end{align*}
Then, $f$ is a mapping that embeds $P$ into $T$.
\qed
\end{proof}

\subsection{Lobster-Lobster Containment}
In this section, we provide a polynomial-time algorithm for \TMC{} when both $T$ and $P$ are lobsters, i.e., have path eccentricity $2$. 
The overall strategy of \Cref{alg:lobster} is the same as \Cref{alg:cattree}, first guess a backbone of $T$, and decide where to contract it to form the backbone of $P$.
However, deciding whether the (partial) minor relation holds after contracting the vertices in the guessed backbone is not 
as simple as when $P$ is a caterpillar.
This means we need to solve the following subproblem.

\newcommand{\DTMfull}{\textsc{Depth 2 Tree Minor Embedding from Lobster}}
\newcommand*{\DTM}{\textsc{D2M}}
\defproblem{\DTMfull{} (\DTM)}{%
Lobster $T$, vertex $r_T$ of $T$, and tree $P$ such that the distance of each vertex of $P$ is at most 2 from $r_P$.
}{
Is there an embedding from $P$ into $T$ such that $f(r_T) = r_P$?
}

\begin{lemmarep}
   \DTM{} can be solved in polynomial time.
\end{lemmarep}
\begin{proofsketch}
The essential case is when $r_T$ is in a backbone $C= (c_1, \dots, c_k=r_T, \dots, c_t)$ of $T$.
Since $P-r_P$ is a disjoint union of stars (here, a graph with a single vertex is also called a star), if we fix an interval of the backbone $\{c_l, \dots, c_r\} \ni r_T$ that is contracted to $r_T$, then \DTM{} can be reduced to the problem to determine whether a disjoint union of stars contains a disjoint union of stars as a minor, and this problem can be solved in polynomial time. Thus, \DTM{} can be solved by trying all intervals $\{c_l, \dots, c_r\}$.
Otherwise, except for some special cases, we can show that $r_T$ must be contracted with a vertex that is closer to the backbone of $T$, and thus \DTM{} is finally reduced to the case that $r_T$ is on a backbone of $T$.
\qed
\end{proofsketch}
    {We denote a polynomial time algorithm that computes the solution of \DTM{} by \Call{EmbedFull}{$T, r_T, P, r_P$}.}
\begin{toappendix} 
The following four lemmas analyze procedures \Call{Match}{$\cdot$}, \Call{EmbedPartial}{$\cdot$}, and \Call{EmbedFull}{$\cdot$}, respectively. Each procedure, except for \Call{Match}{$\cdot$}, uses the previous procedure as a subroutine.
\begin{algorithm}[t!]
\caption{A polynomial-time algorithm to determine whether there is an embedding $f$ that embeds $P$ into $T$ such that $f(r_T)=r_P$, where $r_T$ is assumed to be on the backbone $C$ of $T$.}\label{alg:embed1}
\Procedure{\emph{\Call{Match}{$X=(x_1,\dots, x_{|X|}), X'=(x'_1,\dots, x'_{|X'|}), a\geq 0, a'\geq 0$}}}{
    Let $K=\sum_{i=1}^{|X'|}x'_i$, $h\leftarrow 0$\;
    \For{$i=1,\dots, |X|$}{\label{line:match_loop}
        \lIf{No index $h'>h$ satisfies $x_i\leq x'_{h'}$}{
            \Return no
        }
        \Else{
            Let $h'$ be the smallest index with $h'>h$ and $x_i\leq x'_{h'}$\;
            $K\leftarrow K - x'_{h'}$, $h\leftarrow h'$\;
        }
    }
    \lIf{$a\leq K+a'$}{
        \Return yes
    }
    \Return no\;
}
\Procedure{\emph{\Call{EmbedPartial}{$T,r_T\in V(T),C=(c_1,\dots, c_k=r_T, \dots, c_t),P,r_P\in V(P)$}}}{
    Let $k$ be the index with $c_k=r_T$\;
    Let $X$ be an empty multiset and $a\leftarrow 0$\;
    \For{$v\in N_P(r_P)$}{
        \lIf{$\deg_P(v)=1$}{
            $a\leftarrow a+1$
        }
        \lElse{
            $X\leftarrow X\cup \{\deg_P(v)-1\}$
        }
    }
    Sort $X$ in ascending order and denote as $(x_1, \dots, x_{|X|})$\;
    \For{$y=1,\dots, k$}{\label{line:embed1_yz1}
        \For{$z=k,\dots, t$}{\label{line:embed1_yz2}
            Let $X_{y,z}$ be an empty multiset and $a_{y,z}\leftarrow 0$\;
            \For{$v\in N_T(\{c_y,\dots, c_z\})\setminus \{c_y,\dots, c_z\}$}{
                \lIf{$\deg_T(v)=1$}{
                    $a_{y,z}\leftarrow a_{y,z}+1$
                }
                \Else{
                    Let $l_v$ be the number of leaves (other than $v$) in the connected component of $T - \{c_y,\dots, c_z\}$ containing $v$, and then $X_{y,z}\leftarrow X_{y,z}\cup \{l_v\}$\;
                }
            }
            Sort $X_{y,z}$ in ascending order and denote as $(x_{y,z,1},\dots, x_{y,z,|X_{y,z}|})$\;
            \lIf{\Call{Match}{$X,X_{y,z},a,a_{y,z}$}}{
                \Return yes
            }
        }
    }
    \Return no;
}
\end{algorithm}

\begin{lemmarep}\label{lem:lobster_match}
Let $X = (x_1, \dots, x_{|X|})$ and $X' = (x'_1, \dots, x'_{|X'|})$ be sorted sequences of integers in ascending order, and let $a, a' \in \mathbb{Z}_{\geq 0}$. In this case, \Call{Match}{$X, X', a, a'$} returns ``yes'' if and only if there exists an injective function $g$ from $\{1, \dots, |X|\}$ to $\{1, \dots, |X'|\}$ such that for all $i \in \{1, \dots, |X|\}$, $x_i \leq x'_{g(i)}$ and $a \leq a' + \sum_{i \in \{1, \dots, |X'|\} \setminus \{g(1), \dots, g(|X|)\}} x'_i$.
\end{lemmarep}
\begin{proof}

Let $g^*$ be the injective function from $\{1, \dots, |X|\}$ to $\{1, \dots, |X'|\}$ that satisfies $x_i \leq x'_{g(i)}$ for all $i \in \{1, \dots, |X|\}$ and maximizes the value of $c(g) \colon= \sum_{i \in \{1, \dots, |X'|\} \setminus \{g(1), \dots, g(|X|)\}} x'_i$. We prove that $K = c(g^*)$ holds at the end of the loop starting at line~\ref{line:match_loop}.

Assume that $g^*(i) > g^*(j)$ for some $i < j$. If we exchange the values of $g^*(i)$ and $g^*(j)$, the conditions $x_i \leq x'_{g^*(i)}$ and $x_j \leq x'_{g(j)}$ are still satisfied, and the value of $c(g^*)$ remains unchanged. Therefore, we can assume that the sequence $(g^*(1), \dots, g^*(|X|))$ is strictly increasing. Let $h_i$ be the value of $h'$ taken in the $i$-th iteration of the loop starting at line~\ref{line:match_loop}. Since $(g^*(1), \dots, g^*(|X|))$ is strictly increasing, it can be shown by induction that $h_i \leq g^*(i)$. Thus, by taking $K$ at the end of the loop, we have
\begin{align*}
K=\sum_{i\in \{1,\dots, |X'|\}\setminus \{h_1,\dots, h_{|X|}\}}x'_i
\geq \sum_{i\in \{1,\dots, |X'|\}\setminus \{g^*(1),\dots, g^*(|X|)\}}x'_i=c(g^*)
\end{align*}
and since $g^*$ is defined as maximizing $c(g)$, we have $K = c(g^*)$.
\qed
\end{proof}

\begin{lemmarep}\label{lem:lobster_partial}
Let $T$ be a lobster, $r_T$ be a vertex of $T$, $C = (c_1, \dots, c_t)$ be a backbone of $T$ that contains $r_T$. Let $P$ and $r_P$ be a tree and its vertex, respectively, such that all vertices of $P$ are at a distance of $2$ or less from $r_P$. In this case, \textsc{EmbedPartial}($T, r_T, C, P, r_P$) returns ``yes'' if and only if there exists an embedding $f$ of $P$ into $T$ such that $f(r_T) = r_P$.
\end{lemmarep}
\begin{proof}

\begin{figure}[tb]
    \centering
    \includestandalone[width=\linewidth]{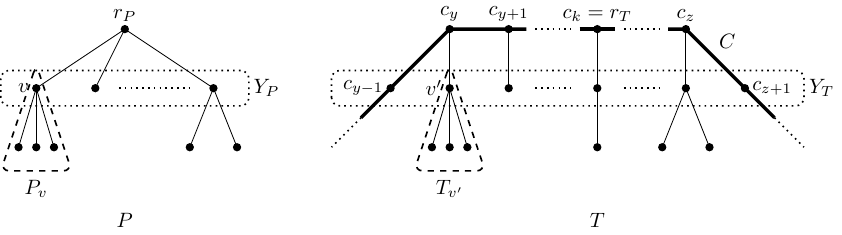}
    \caption{Trees in Lemma~\ref{lem:lobster_partial}. The left and right figure represents the tree $P$ and $T$, respectively. The thick line in $T$ represents its backbone $C$.}
    \label{fig:lobster_partial}
\end{figure}

Assuming the existence of an embedding $f$ that satisfies the conditions of the lemma, and let $C \cap f^{-1}(r_P) = {c_y, \dots, c_z}$. Since $f(r_T) = f(c_k) = r_P$, we have $y \leq k \leq z$ and thus there exists a pair of $y$ and $z$ that satisfy the double loops starting at line~\ref{line:embed1_yz1} and line~\ref{line:embed1_yz2}. Let $Y_P = N_P(r_P)$ and $Y_T = N_T({c_y, \dots, c_z})$. For each $v \in Y_P$, let $P_v$ denote the connected component of $P - {r_P}$ containing $v$ and consider it a rooted tree with $v$ as the root. Similarly, for each $v \in Y_T$, let $T_v$ denote the connected component of $T - \braces{c_y, \dots, c_z}$ that contains $v$ and consider it as a rooted tree with $v$ as the root.

Since $f$ is an embedding, for each $v \in Y_P$, there exists a unique $v' \in Y_T$ such that $f^{-1}(P_v) \subseteq T_{v'}$. Let us denote this mapping from $v \in Y_P$ to $v' \in Y_T$ as $g$. We will prove the following claim:

\begin{claim}\label{cl:loblob_g}
For a given $v' \in Y_T$, one of the following conditions holds:
\begin{itemize}
\item There is at most one $v \in Y_P$ such that $g(v) = v'$.
\item For all $v \in Y_P$ such that $g(v) = v'$, $P_v$ consists of a single vertex.
\end{itemize}
\end{claim}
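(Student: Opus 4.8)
The plan is to assume the first alternative of the claim fails — that at least two distinct vertices $v_1, v_2 \in Y_P$ satisfy $g(v_1) = g(v_2) = v'$ — and to prove that the second alternative then holds, namely that $P_v$ is a single vertex for every $v \in Y_P$ with $g(v) = v'$. First I would record two structural facts about the component $T_{v'}$. Because $T$ is a tree and $\{c_y,\dots,c_z\}$ is a contiguous subpath of the backbone, $T_{v'}$ is attached to $\{c_y,\dots,c_z\}$ by a single edge $\{v', c_j\}$ with $c_j\in\{c_y,\dots,c_z\}$; in particular every edge of $T$ leaving $T_{v'}$ passes through $v'$. Moreover, since $T$ is a lobster with backbone $C$, whenever $v'\notin C$ the vertex $v'$ lies at distance $1$ from $C$, so all of its descendants inside $T_{v'}$ are at distance $2$ from $C$ and hence are leaves; thus each such $T_{v'}$ is a star centered at $v'$.

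The core step is to show that $v'\in f^{-1}(r_P)$. Suppose not. Since $f^{-1}(r_P)$ is connected and meets $T_{v'}$ only through $v'$ (by the single-edge attachment), $v'\notin f^{-1}(r_P)$ forces $f^{-1}(r_P)\cap V(T_{v'})=\emptyset$. Now each $f^{-1}(v_i)\subseteq V(T_{v'})$ is connected and, because $v_i$ is adjacent to $r_P$ in $P$, must be joined to $f^{-1}(r_P)$ by an edge of $T$; the only such edge is $\{v',c_j\}$, forcing $v'\in f^{-1}(v_1)$ and $v'\in f^{-1}(v_2)$. This contradicts the disjointness of $f^{-1}(v_1)$ and $f^{-1}(v_2)$. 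Hence $v'\in f^{-1}(r_P)$. Because $f^{-1}(r_P)\cap C=\{c_y,\dots,c_z\}$ while $v'\in Y_T$ lies outside this interval, I conclude $v'\notin C$, so by the structural observation $T_{v'}$ is a star centered at $v'$.

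Finally I would exploit this star structure to pin down every $P_v$ with $g(v)=v'$. Such a set $f^{-1}(v)$ is contained in $V(T_{v'})$ and disjoint from $f^{-1}(r_P)\ni v'$, so it lies among the leaves of the star; being connected, it is a single leaf $\ell$ whose only neighbor is $v'$. If $v$ had a child $w$ in $P_v$, then $f^{-1}(w)$ would have to be adjacent to $\ell$, forcing $v'\in f^{-1}(w)$, which is impossible since $v'\in f^{-1}(r_P)$ and $w\neq r_P$. Therefore every such $v$ has $P_v$ equal to a single vertex, which is exactly the second alternative. The main obstacle I anticipate is the careful bookkeeping around the connectivity of $f^{-1}(r_P)$ and the single-edge attachment of $T_{v'}$; these are precisely what yield the clean dichotomy between \emph{end} components (which lie on $C$ and hence, by $f^{-1}(r_P)\cap C=\{c_y,\dots,c_z\}$, can never be absorbed into $f^{-1}(r_P)$, giving the first alternative) and \emph{side} components (which are stars, giving the second alternative).
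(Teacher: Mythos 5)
Your proof is correct and rests on exactly the same structural facts as the paper's: the single-edge attachment of $T_{v'}$ to the contracted interval, disjointness and connectivity of the branch sets, and the lobster property forcing every off-backbone component to be a star centered at $v'$. The paper merely organizes these facts as a forward case analysis on the depth of $T_{v'}$ (depth $\geq 2$ forces $v' \in \{c_{y-1}, c_{z+1}\}$, hence $f(v') \neq r_P$ and at most one preimage; depth $\leq 1$ splits on whether $f(v') = r_P$), while you take the contrapositive and first derive $v' \in f^{-1}(r_P)$ -- the same argument, spelled out in more detail than the paper's terse version.
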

\begin{claimproof}
Let us consider cases based on the depth of $T_{v'}$. If the depth of $T_{v'}$ is at least $2$, noting that $C$ is the backbone of $T$, we observe that $v' = c_{y-1}$ or $v' = c_{z+1}$. According to the choices of $y$ and $z$, it follows that $f(c_{y-1}) \neq r_P$ and $f(c_{z+1}) \neq r_P$, so $|g^{-1}(v')| = 1$.

Now, let us consider the case when the depth of $T_{v'}$ is $1$. If $f(v') = r_P$, for all $v \in Y_P$ such that $g(v) = v'$, $P_v$ consists of a single vertex. If $f(v') \neq r_P$, then $|g^{-1}(v')| \leq 1$. Additionally, when $T_{v'}$ consists of a single vertex, $|g^{-1}(v')| \leq 1$ as well.
\end{claimproof}

Let us prove that \textsc{EmbedPartial}($T,r_T,C,P,r_P$) returns ``yes.''
From the choice of $r_P$, it follows that the depth of $P_v$ for any $v\in Y_P$ is at most $1$. 
Let $v\in Y_P$. By Claim~\ref{cl:loblob_g}, if the depth of $P_v$ is $1$, we have $V(T_{g(v)})\cap Y_P=\{v\}$, and thus $\deg_P(v)-1\leq l_{g(v)}$ holds. Let $Z$ be the set of $v'\in Y_T$ that can be written as $v'=g(v)$ using such $v$.
If $P_v$ consists of a single vertex, we have $g(v)\in Y_T\setminus Z$. Hence, $a\leq a_{y,z}+\sum_{v'\in Y_T\setminus Z}l_{v'}$ holds. Therefore, there exists an injective mapping $g'$ from $X$ to $X_{y,z}$ such that $x\leq g'(x)$ for all $x\in X$ and $a\leq a_{y,z}+\sum_{x'\in X_{y,z}\setminus g'(X)}x'$, as guaranteed by Lemma~\ref{lem:lobster_match}. Consequently, \textsc{Match}($X,X_{y,z},a,a_{y,z}$) returns ``yes,'' and therefore, \textsc{EmbedPartial}($T,r_T,C,P,r_P$) also returns ``yes.''

Conversely, assuming that the algorithm returns ``yes,'' by Lemma~\ref{lem:lobster_match}, there exists $y\leq k\leq z$ and an injective mapping $g'$ from $X$ to $X_{y,z}$ such that $x\leq g'(x)$ holds for all $x\in X$ and $a\leq a_{y,z}+\sum_{x'\in X_{y,z}\setminus g'(X)}x'$. Let $Y_P$, $Y_T$, $P_v$, and $T_v$ be defined as before, and construct a mapping $f\colon T\ni u\mapsto u'\in P$ as follows.

First, let us define $f^{-1}(u)$ for $u\neq r_P$. Let $v\in Y_P$ and assume the depth of $P_v$ is $1$. Then, a corresponding $v'\in Y_T$ is determined naturally by the mapping $g'$. Let $Z$ be the set of such $v'$. Consider $v$'s children as $v_1,\dots, v_{\deg_P(v)-1}$ and the leaves of $T_{v'}$ as $v'_1,\dots, v'_{l(v')}$. Then, $\deg_P(v)-1\leq l(v')$ holds due to the conditions satisfied by $g'$. For $w\in \bigcup_{v'\in Z}V(T_{v'})$, let
\begin{align*}
f(w)=\begin{cases}
    v_i & (w=v'_i, i=1, \dots, \deg_P(v)-1)\\
    v & (\text{otherwise})
\end{cases}.
\end{align*}

There are $a$ vertices $v_1, \dots, v_a$ such that $P_v$ consists of a single vertex. According to the conditions satisfied by $g'$, at least $a$ leaves of $T$ are included in $\bigcup_{v'\in Y_T\setminus Z}V(T_{v'})$. Let $v'_1, \dots, v'_a$ be $a$ such leaves. For $w\in \bigcup_{v'\in Y_T\setminus Z}V(T_{v'})$, let
\begin{align*}
f(w)=\begin{cases}
    v_i & (w=v'_i)\\
    r_P & (\text{otherwise})
\end{cases}.
\end{align*}

Finally, let us define $f(w)=r_P$ for $w\in {c_y, \dots, c_z}$. Then, we obtain an embedding $f$ of $P$ into $T$.
\qed
\end{proof}

\begin{algorithm}[t!]
\caption{A polynomial-time algorithm to determine whether there is an embedding $f$ that embeds $P$ into $T$ such that $f(r_T)=r_P$, where $r_T$ is not necessarily on a backbone of $T$.}\label{alg:sub}
\Procedure{\emph{\Call{EmbedFull}{$T,r_T,P,r_P$}}}{
    \If{There exists a backbone $C=(c_1,\dots, c_t)$ of $T$ that contains $r_T$}{
        \Return \Call{EmbedPartial}{$T,r_T,C,P,r_P$}
    }
    \ElseIf{There exists a backbone of $T$ that contains a neighbor $c$ of $r_T$}{
        \If{All neighbor of $r_P$ in $P$ is a leaf}{
            \lIf{$\deg_P(r_P)\leq \deg_T(r_T)$}{
                \Return yes\label{line:sub_yes1}
            }
        }
        \If{There is exactly one non-leaf neighbor $q$ of $r_P$ in $P$}{
            Let $l$ be the number of leaves in the connected component of $T - r_T$ that contains $c$\;
            \lIf{$\deg_P(r_P)-1\leq \deg_T(r_T)-1$ and $\deg_P(q)-1\leq l$}{
                \Return yes\label{line:sub_yes2}
            }
        }
        \Return \Call{EmbedFull}{$T/\{r_T,c\}, r_T ,P, r_P$}\;
    }
    \Else{
        \If{$\deg_P(r_P)=1$}{
            Let $q_P$ be the unique neighbor of $r_P$ in $P$\;
            Let $l$ be the number of leaves in $T$ other than $r_T$\;
            \lIf{$\deg_P(q_P)-1\leq l$}{
                \Return yes\label{line:sub_yes3}
            }
        }
        Let $q_T$ be a unique neighbor of $r_T$ in $T$\;
        \Return \Call{EmbedFull}{$T/\{r_T,q_T\}, r_T, P,r_P$}\;
    }
}
\end{algorithm}

\begin{lemmarep}\label{lem:lobster_full}
If $T$ is a lobster, $r_T$ is a vertex of $T$, $P$ is a tree with a vertex $r_P$, and all vertices of $P$ are at a distance of $2$ or less from $r_P$, then \textsc{EmbedFull}($T,r_T,P,r_P$) returns ``yes'' if and only if there exists an embedding $f$ of $P$ into $T$ such that $f(r_T)=r_P$.
\end{lemmarep}
\begin{proof}
For each $v\in N_P(r_P)$, let $P_v$ be the connected component of $P - \{r_P\}$ that contains $v$, and consider $P_v$ as a rooted tree with $v$ as the root.
Similarly, for each $v\in N_T(r_T)$, let $T_v$ be the connected component of $T - \{r_T\}$ that contains $v$, and consider $T_v$ as a rooted tree with $v$ as the root.
If there exists a backbone of $T$ that includes $r_T$, Lemma~\ref{lem:lobster_partial} proves the lemma.

Next, let us consider the case where there is no backbone of $T$ that includes $r_T$, but a backbone of $T$ includes a neighbor $c$ of $r_T$. We will consider two cases based on whether $f(c)=r_P$ or not. First, we consider the case where $f(c)=r_P$.
Satisfying the conditions of the lemma and having $f(c) = r_P$ is equivalent to the condition that the embedding induced by $f$ embeds $P$ into $T/\{r_T,c\}$. The new vertex formed by contracting $r_T$ and $c$ lies on the backbone of $T/\{r_T,c\}$, so this case reduces to the previous case.

Next, let us consider the case where $f(c)\neq r_P$. Assuming the existence of an $f$ satisfying the conditions of the lemma and $f(c)\neq r_P$, it follows that there is at most one $v\in N_P(r_P)$ such that $f(P_v)\cap T_c\neq \emptyset$. Furthermore, due to the condition that $T$ is a lobster, $T_{v'}$ consists of a single vertex for $v'\in N_T(r_T)\setminus \{c\}$. Therefore, there is at most one $v\in N_P(r_P)$ with $P_v$ consisting of more than one vertex.

If $P_v$ consists of one vertex for all $v\in N_P(r_P)$, then noting that $f(c)\neq r_P$, no two vertices in $N_P(r_P)$ are embedded into the same $T_{v'}$ for $v'\in N_T(r_T)$. Thus, $\deg_P(r_P)\leq \deg_T(r_T)$ holds, and the algorithm returns ``yes'' at line~\ref{line:sub_yes1}. Conversely, if the algorithm returns ``yes'' at line~\ref{line:sub_yes1}, it implies $\deg_P(r_P)\leq \deg_T(r_T)$. In this case, we can take the vertices of $N_P(r_P)$ and $N_T(r_T)$ as $v_1,\dots, v_{\deg_P(r_P)}$ and $c=v'_1,\dots, v'_{\deg_T(r_T)}$, respectively, and let
\begin{align*}
f(w)=\begin{cases}
    v_1 & (w\in V(T_c))\\
    v_i & (w=v'_i, i=2,\dots, \deg_P(r_P))\\
    r_P & (\text{otherwise})
\end{cases}.
\end{align*}
Then, $f$ satisfies the required conditions.

If there exists exactly one $v\in N_P(r_P)$ such that $P_v$ consists of more than one vertex, let $q$ be such a vertex. From the previous analysis, we know that $f(c)=q$, where $c$ is the neighbor of $r_T$, and the number of leaves $l$ in $T_c$ (excluding $c$) is at least $\deg_P(q)-1$. Moreover, since $f$ embeds all vertices in $N_P(r_P)\setminus \{q\}$ to distinct vertices in $N_T(r_T)\setminus \{c\}$, we have $\deg_P(r_P)-1\leq \deg_T(r_T)-1$. Therefore, the algorithm returns ``yes'' at line~\ref{line:sub_yes2}.

Conversely, if the algorithm returns ``yes'' at line~\ref{line:sub_yes2}, it implies that $\deg_P(q)-1\leq l$ and $\deg_P(r_P)-1\leq \deg_T(r_T)-1$. Let $v_1,\dots, v_{\deg_P(q)-1}$ be the vertices of $N_P(q)\setminus \{r_P\}$, and choose $\deg_P(q)-1$ leaves from $T_c$ as $v'_1,\dots, v'_{\deg_P(q)-1}$. Additionally, let $u_1,\dots, u_{\deg_P(r_P)-1}$ be the vertices of $N_P(r_P)\setminus \{q\}$ and $u'_1,\dots, u'_{\deg_T(r_T)-1}$ be the vertices of $N_T(r_T)\setminus \{c\}$. Let
\begin{align*}
f(w)=\begin{cases}
    v_i & (w=v'_i)\\
    q & (w\in V(T_c)\setminus \{v'_1,\dots, v'_{\deg_P(v)-1}\})\\
    u_i & (w=u'_i, i=2,\dots, \deg_P(r_P)-1)\\
    r_P & (\text{otherwise})
\end{cases}.
\end{align*}
Then, $f$ satisfies the required conditions.

Finally, let us consider the case where there is no backbone of $T$ that includes neither $r_T$ nor any neighbor of $r_T$. Due to the condition that $T$ is a lobster, we have $\deg_T(r_T)=1$ in this case. Let $q_T$ be the unique neighbor of $r_T$, and consider two cases based on whether $f(q_T)=r_P$ or not. First, consider the case $f(q_T)=r_P$. Satisfying the conditions of the lemma and having $f(q_T)=r_P$ is equivalent to the condition that the embedding induced by $f$ embeds $P$ into $T/\{r_T,q_T\}$. The new vertex formed by contracting $r_T$ and $q_T$ has neighbors on the backbone of $T/\{r_T,q_T\}$, so this case reduces to the previous case.

Finally, let us consider the case where $f(q_T)\neq r_P$. Assuming the existence of an $f$ satisfying the conditions of the lemma and $f(q_T)\neq r_P$, we obtain that there is at most one $v\in N_P(r_P)$ such that $f(P_v)\cap T_{q_T}\neq \emptyset$, and combining this with $N_T(r_T)=\{q\}$, we have $\deg_P(r_P)=1$. Let $q_P$ be the unique neighbor of $r_P$. In this case, we have $f(q_T)=q_P$, and the number of leaves in $T$ (other than $r_T$) is equal to the number of leaves in $T_{q_T}$, which is denoted as $l$, and is at least $\deg_P(q_P)-1$. Thus, the algorithm returns ``yes'' at line~\ref{line:sub_yes3}.

Conversely, if the algorithm returns ``yes'' at line~\ref{line:sub_yes3}, it implies that $\deg_P(q_P)-1\leq l$. Let $v_1,\dots, v_{\deg_P(q_P)-1}$ be the vertices in $N_P(q_P)\setminus \{r_P\}$ and take $\deg_P(q_P)-1$ leaves from $T_{q_T}$ as $v'_1,\dots, v'_{\deg_P(q_P)-1}$. Let
\begin{align*}
f(w)=\begin{cases}
    v_i & (w=v'_i)\\
    r_P & (w=r_T)\\
    q_P & (\text{otherwise})
\end{cases}.
\end{align*}
Then, $f$ satisfies the required conditions.
\qed
\end{proof}
\end{toappendix} 

\begin{algorithm}[t!]
\caption{An algorithm for the case that both of trees are lobsters}\label{alg:lobster}
\Procedure{\emph{\Call{LobInLob}{$T,P$}}}{
    Let $B=(b_1,\dots, b_s)$ be a backbone of $P$\;
    \For{$i=1,\dots, s$}{
        Let $P_i$ be the connected component of $P - E[B]$ containing $b_i$\;
    }
    \For{$u,v\in V(T)$}{\label{line:lobster_uv}
        Let $C=(u=c_1,\dots,c_t=v)$ be the $u-v$ path in $T$\;
        \For{$i=1,\dots, t$}{
            Let $T_i$ be the connected component of $T- E[C]$ containing $c_i$\;
        }
        $x\leftarrow 0$, $\mathrm{flag}\leftarrow \mathrm{true}$\;
        \For{$i=1, \dots, s$}{\label{line:lob_loop}
            Let $j$ be the smallest index such that \Call{EmbedFull}{$T_{x+1,j}, c_{x+1}, P_i, b_i$} returns yes, where $T_{x+1,j}=\left(\bigcup_{k=x+1}^{j}T_k\right)/\{c_{x+1},\dots, c_j\}$\;
            \If{There is no such $j$}{
                $\mathrm{flag} \leftarrow \mathrm{false}$\;
                \textbf{break}\;
            }
            $x\leftarrow j$\;
        }
        \lIf{$\mathrm{flag} = \mathrm{true}$}{
            \Return yes
        }
    }
    \Return no\;
}
\end{algorithm}

We present an algorithm that computes the solution of \TMC{} when both trees are lobsters by using \Call{EmbedFull}{$\cdot$} as a subroutine.

\begin{theoremrep}\label{lem:loblob}
\Cref{alg:lobster} returns yes if and only if $P$ is a minor of $T$.
\end{theoremrep}
\begin{proof}
Suppose that $P$ is embedded into $T$ by the embedding $f$. For each $i=1,\dots, s-1$, let $e_i$ be the edge connecting $f^{-1}(b_i)$ and $f^{-1}(b_{i+1})$. Then, there exists a path in $T$ such that $e_1,\dots, e_{s-1}$ appear in that order along the path. We choose the minimal path among such paths and denote it as $C=(c_1,\dots, c_t)$. In the loop starting at line~\ref{line:lobster_uv}, we consider the case where $u=c_1$ and $v=c_t$.

Using integers $1=z_1,\dots, z_{s+1}=t+1$, we define $f^{-1}(b_i)\cap C=\{c_{z_i},\dots, c_{z_{i+1}-1}\}$ for $i=1,\dots, s$. By setting $T_{j',j}=\left(\bigcup_{k=j'}^{j}T_k\right)/\{c_{j'},\dots, c_j\}$ for $1\leq j'\leq j\leq s$, \textsc{EmbedFull}($T_{c_{z_i},c_{z_{i+1}-1}},c_{{z_i}},P_i,b_i$) returns ``yes''. Since adding vertices to $T_{c_{z_i},c_{z_{i+1}-1}}$ does not exclude $P_i$ as a minor, \textsc{EmbedFull}($T_{c_{j},c_{j'}},c_{j},P_i,b_i$) returns ``yes'' for $j'\leq c_{z_i}$ and $c_{z_{i+1}-1}\leq j$. In particular, by defining $x_i$ as the value of $x$ at the end of the $i$-th iteration of the loop starting at line~\ref{line:lob_loop}, it can be proven by induction that $x_i\leq c_{z_{i+1}-1}$ holds. Therefore, the algorithm returns ``yes''.

Conversely, we assume that the algorithm returns ``yes'' and consider the corresponding $C=(c_1,\dots, c_t)$. We define $x_i$ as before. By \textsc{EmbedFull}($T_{x_i,x_{i+1}-1}, c_{x_i}, P_i, b_i$), we are guaranteed the existence of an embedding $f_i$ of $P_i$ into $T_{x_i,x_{i+1}-1}$. Let
\begin{align*}
f(w)=\begin{cases}
    f_i(w) & (w\in V\left(T_{x_i,x_{i+1}-1}\right)\setminus \{c_{x_i},\dots, c_{x_{i+1}-1}\})\\
    b_i & (w\in \{c_{x_i},\dots, c_{x_{i+1}-1}\})
\end{cases}.
\end{align*}
Then, $f$ is an embedding of $P$ into $T$.
\qed
\end{proof}

\subsection{Applications of \cref{alg:cattree,alg:lobster}}\label{subsec:cr}
As shown in \Cref{tab:summary}, we give polynomial-time algorithms for \TMC{} with small diameter, path eccentricity, and pathwidth.
Since we already show the case $\mathrm{pe}(P)\le 1$ in \Cref{lem:cattree} and $\mathrm{pe}(T) \le 2$ in \Cref{lem:loblob},
we show the cases with small diameters and pathwidths.
These results can be easily shown by using the results in previous subsections.

\begin{theorem}
    \TMC{} can be solved in polynomial time when $\diam(P)\leq 3$ or $\diam(T)\leq 5$.
\end{theorem}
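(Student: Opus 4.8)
The plan is to reduce both cases to the polynomial-time algorithms already established, namely \Cref{lem:cattree} (tree-caterpillar containment) and \Cref{lem:loblob} (lobster-lobster containment). The key observation is that bounding the diameter of a tree bounds its path eccentricity: for any tree with $\diam \le 2k+1$, every vertex lies within distance $k$ of a central diametral path, so the path eccentricity is at most $k$. Concretely, $\diam(P)\le 3$ forces $\mathrm{pe}(P)\le 1$, i.e. $P$ is a caterpillar, and $\diam(T)\le 5$ forces $\mathrm{pe}(T)\le 2$, i.e. $T$ is a lobster.

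For the case $\diam(P)\le 3$, I would first verify the claim that $P$ is a caterpillar. Take a longest path in $P$; since its length is at most $3$, removing all leaves of $P$ leaves a path (indeed a single edge or single vertex), so $\mathrm{pe}(P)\le 1$ and $P$ is a caterpillar by the definitions in the preliminaries. Then $P$ being a caterpillar is exactly the hypothesis of \Cref{lem:cattree}, so Algorithm~\ref{alg:cattree} run on $(T,P)$ decides \TMC{} correctly in polynomial time, for an arbitrary tree $T$. This disposes of the first case immediately.

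For the case $\diam(T)\le 5$, the subtlety is that the lobster-lobster algorithm of \Cref{lem:loblob} requires \emph{both} trees to be lobsters, whereas here only $T$ is guaranteed to be a lobster (from $\diam(T)\le 5 \Rightarrow \mathrm{pe}(T)\le 2$). The plan is to branch on whether $P$ is itself a lobster. If $\mathrm{pe}(P)\le 2$, then both $T$ and $P$ are lobsters and \Cref{alg:lobster} applies directly. If $\mathrm{pe}(P)\ge 3$, then $P$ contains a path of length at least $6$ (a backbone vertex at distance $3$ from the nearest spine vertex forces $\diam(P)\ge 6$); but any minor of a tree $T$ has diameter at most $\diam(T)\le 5$, since contracting edges cannot increase the diameter of a tree. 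Hence $P$ cannot be a minor of $T$ and the answer is trivially ``no''. Combining the two branches yields a polynomial-time decision procedure.

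I expect the main obstacle to be formalizing the two structural implications cleanly: first, that $\diam \le 2k+1$ implies $\mathrm{pe}\le k$ (requiring one to exhibit a backbone as a diametral path and bound each vertex's distance to it), and second, that minor operations on trees do not increase the diameter (so that a large-diameter $P$ cannot embed into a small-diameter $T$). Both facts are intuitively clear and follow from the definitions, but they are the load-bearing steps; once they are in place, the reductions to \Cref{lem:cattree} and \Cref{lem:loblob} are routine case analysis.
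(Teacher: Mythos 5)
Your proposal is correct and takes essentially the same route as the paper: the case $\diam(P)\le 3$ is handled by noting $P$ is a caterpillar and invoking \Cref{lem:cattree}, and the case $\diam(T)\le 5$ is reduced to the lobster--lobster algorithm of \Cref{lem:loblob} after dismissing the remaining situation via the fact that a minor of a tree cannot have larger diameter. The only cosmetic difference is that you branch on whether $\mathrm{pe}(P)\ge 3$ (which forces $\diam(P)\ge 6$), while the paper branches directly on whether $\diam(P)>\diam(T)$; the two tests play the same role and rest on the same structural facts.
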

\begin{proof}
    Since a tree with a diameter at most $3$ is a caterpillar, 
    we can solve \TMC{} when $\diam(P) \le 3$ from \Cref{lem:cattree}.
    Moreover, when $\diam(P) > \diam(T)$, $T$ does not contains $P$ as a minor obviously.
    Therefore, we can assume that $\diam(P) \le \diam(T)$.
    Since a tree with a diameter at most $5$ is a lobster, 
    \TMC{} can be solved in polynomial time when $\diam(P) \le \diam(T) \le 5$ from \Cref{lem:loblob}.
    \qed
\end{proof}

\begin{theorem}
    \TMC{} can be solved in polynomial time when $\mathrm{pw}(P) \le 1$.
\end{theorem}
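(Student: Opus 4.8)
The plan is to reduce this case directly to the tree--caterpillar containment algorithm already established in \Cref{lem:cattree}. The only thing that needs to be checked is that the hypothesis $\mathrm{pw}(P)\le 1$ forces $P$ to be a caterpillar, after which the problem is exactly the one solved by Algorithm~\ref{alg:cattree}. Recall that, as observed in the introduction, for any tree the path eccentricity is at most the pathwidth; hence $\mathrm{pw}(P)\le 1$ implies $\mathrm{pe}(P)\le 1$, and by the definitions in the preliminaries a tree has path eccentricity at most $1$ precisely when it is a caterpillar. Thus $P$ is a caterpillar.

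Given that $P$ is a caterpillar, I would simply invoke \Cref{lem:cattree}: Algorithm~\ref{alg:cattree} returns ``yes'' if and only if $P$ is a minor of $T$, and, as already noted, it runs in polynomial time regardless of the structure of $T$ (it iterates over all $u,v\in V(T)$ to guess the path $C$ to which the backbone of $P$ is contracted, and for each guess performs the greedy leaf-counting matching in the loop at line~\ref{line:cat_loop}). Since the correctness and the polynomial running time of Algorithm~\ref{alg:cattree} hold for arbitrary $T$, this immediately yields a polynomial-time algorithm for \TMC{} whenever $\mathrm{pw}(P)\le 1$, completing the proof.

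The only conceptual point worth spelling out is the equivalence ``$\mathrm{pw}(P)\le 1 \Leftrightarrow P$ is a caterpillar,'' which is the one step not literally a citation of an earlier statement. I expect this to be the mild ``obstacle,'' though it is entirely standard: the inclusion $\mathrm{pe}(P)\le \mathrm{pw}(P)$ already gives the direction we need, so no separate argument is strictly required. If one preferred a self-contained justification rather than relying on that inequality, I would note that a caterpillar admits a width-$1$ path decomposition whose bags follow its backbone (each bag being a spine edge together with, successively, its pendant leaves), and conversely that any tree that is not a caterpillar contains the spider obtained by subdividing each edge of $K_{1,3}$ once, which has pathwidth $2$. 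Either way the argument is short, and there is no genuine algorithmic difficulty: the entire content is the reduction to the already-proved \Cref{lem:cattree}.
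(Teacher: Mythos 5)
Your proposal is correct and follows essentially the same route as the paper: the paper's proof is the one-line observation that a tree of pathwidth $1$ is a caterpillar, followed by an invocation of \Cref{lem:cattree}. Your additional justification of the pathwidth-$1$-implies-caterpillar step (via the inequality $\mathrm{pe}(P)\le\mathrm{pw}(P)$, or the self-contained decomposition argument) is fine but not needed beyond what the paper itself asserts.
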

\begin{proof}
    Since a tree with pathwidth 1 is a caterpillar, 
    we obtain a polynomial-time algorithm from \Cref{lem:cattree}.
    \qed
\end{proof}
\bibliographystyle{splncs04}
\bibliography{main.bib}
\end{document}